\documentclass[sigconf]{acmart}

\usepackage{algorithm,algpseudocode}
\usepackage{graphicx}
\usepackage{textcomp}
\usepackage{xspace}
\usepackage{url}
\usepackage[utf8]{inputenc}
\usepackage[english]{babel}
\usepackage{lipsum}
\usepackage{dblfloatfix}
\usepackage{enumitem, array}
\usepackage{subfigure}
\usepackage{tablefootnote}
\usepackage{xcolor}
\usepackage{colortbl}
\usepackage{setspace}
\usepackage[utf8]{inputenc}
\usepackage[english]{babel}
\usepackage{multicol}
\usepackage{multirow}
\usepackage{tikz}
\usepackage{booktabs}

\usepackage[colorinlistoftodos]{todonotes}
\usepackage{soul}
\newcommand{\remove}[1]{}

\usepackage{pifont}
\newcommand{\cmark}{\ding{51}}%
\newcommand{\xmark}{\ding{55}}%

\newcommand{\DBGePh}{\mathsf{GDB}}

\newcommand{\DOID}{\mathsf{ID}}
\newcommand{\DO}{O}
\newcommand{\Trustee}{\mathcal{T}}
\newcommand{\Vetter}{\mathcal{V}}
\newcommand{\Server}{\mathcal{D}}

\newcommand{\ACEPrivGenDB}{\mathsf{ACE}}

\newcommand{\User}{\mathcal{U}}

\newcommand{\K}{\mathsf{K}}
\definecolor{mygray1}{gray}{0.6}
\definecolor{mygray2}{gray}{0.8}

\definecolor{orange}{cmyk}{0.1,0.5,0.9,0.3}

\usepackage{graphicx}
\usepackage{textcomp}
\usepackage{xcolor}


\AtBeginDocument{%
  \providecommand\BibTeX{{%
    \normalfont B\kern-0.5em{\scshape i\kern-0.25em b}\kern-0.8em\TeX}}}
\acmYear{2023}
\acmDOI{XXXXXXX.XXXXXXX}

\remove{
\setcopyright{acmcopyright}
\copyrightyear{2018}
\acmYear{2023}
\acmDOI{XXXXXXX.XXXXXXX}

\acmConference[Conference acronym 'XX]{Make sure to enter the correct
  conference title from your rights confirmation emai}{June 03--05,
  2018}{Woodstock, NY}
%
%
\acmPrice{15.00}
\acmISBN{978-1-4503-XXXX-X/18/06}

}

\begin{document}

\title{ACE: A Consent-Embedded privacy-preserving search on genomic database}

\remove{

Sara Jafarbeiki (Monash University and CSIRO's Data61);
  Amin Sakzad (Monash University);
  Ron Steinfeld (Monash University);
  Shabnam Kasra Kermanshahi (RMIT);
  Chandra Thapa (CSIRO's Data61);
  Yuki Kume (Monash University)

}

\author{Sara Jafarbeiki$^{1,2,*}$, Amin Sakzad$^{1}$, Ron Steinfeld$^{1}$, Shabnam Kasra Kermanshahi$^{3}$, Chandra Thapa$^{2}$, Yuki Kume$^{1}$} 
    \affiliation{ 
      \institution{}
      \streetaddress{}
      \city{$^{1}$ Monash University, $^{2}$ CSIRO's Data61, $^{3}$ University of New South Wales (UNSW) Canberra} 
      \state{} 
      \country{Australia}
      \postcode{}
    }
    \email{* sara.jafarbeiki@monash.edu}

\remove{
\author{Ben Trovato}
\authornote{Both authors contributed equally to this research.}
\email{trovato@corporation.com}
\orcid{1234-5678-9012}
\author{G.K.M. Tobin}
\authornotemark[1]
\email{webmaster@marysville-ohio.com}
\affiliation{%
  \institution{Institute for Clarity in Documentation}
  \streetaddress{P.O. Box 1212}
  \city{Dublin}
  \state{Ohio}
  \country{USA}
  \postcode{43017-6221}
}

\author{Lars Th{\o}rv{\"a}ld}
\affiliation{%
  \institution{The Th{\o}rv{\"a}ld Group}
  \streetaddress{1 Th{\o}rv{\"a}ld Circle}
  \city{Hekla}
  \country{Iceland}}
\email{larst@affiliation.org}

\author{Valerie B\'eranger}
\affiliation{%
  \institution{Inria Paris-Rocquencourt}
  \city{Rocquencourt}
  \country{France}
}

\author{Aparna Patel}
\affiliation{%
 \institution{Rajiv Gandhi University}
 \streetaddress{Rono-Hills}
 \city{Doimukh}
 \state{Arunachal Pradesh}
 \country{India}}

\author{Huifen Chan}
\affiliation{%
  \institution{Tsinghua University}
  \streetaddress{30 Shuangqing Rd}
  \city{Haidian Qu}
  \state{Beijing Shi}
  \country{China}}

\author{Charles Palmer}
\affiliation{%
  \institution{Palmer Research Laboratories}
  \streetaddress{8600 Datapoint Drive}
  \city{San Antonio}
  \state{Texas}
  \country{USA}
  \postcode{78229}}
\email{cpalmer@prl.com}

\author{John Smith}
\affiliation{%
  \institution{The Th{\o}rv{\"a}ld Group}
  \streetaddress{1 Th{\o}rv{\"a}ld Circle}
  \city{Hekla}
  \country{Iceland}}
\email{jsmith@affiliation.org}

\author{Julius P. Kumquat}
\affiliation{%
  \institution{The Kumquat Consortium}
  \city{New York}
  \country{USA}}
\email{jpkumquat@consortium.net}
}
\renewcommand{\shortauthors}{}

\begin{abstract}
In this paper, we introduce ACE, a consent-embedded searchable encryption scheme. ACE enables dynamic consent management by supporting the physical deletion of associated data at the time of consent revocation. This ensures instant real deletion of data, aligning with privacy regulations and preserving individuals' rights. 
We evaluate ACE in the context of genomic databases
, demonstrating its ability to perform the addition and deletion of genomic records and related information based on ID, which especially complies with the requirements of deleting information of a particular data owner. To formally prove that ACE is secure under non-adaptive attacks, we present two new definitions of forward and backward privacy. We also define a new hard problem, which we call D-ACE, that facilitates the proof of our theorem (we formally prove its hardness by a security reduction from DDH to D-ACE). We finally present implementation results to evaluate the performance of ACE.

\remove{
In this paper, we present a consent-embedded searchable encryption based scheme, which we call ACE. This scheme is presented and evaluated considering genomic database.
Moreover, the embedded consent is dynamic, which prevents the search over the associated genomic data whenever the consent is revoked. Therefore, a dynamic system model that supports the physical deletion of data at the time of consent revocation is required to provide a dynamic consent feature.
In this regard, dynamic searchable symmetric encryption (DSSE) is a promising cryptographic tool that enables secure keyword searching over dynamically added or deleted ciphertexts. However, it is still challenging to design a DSSE that provides addition and physical deletion of keywords based on the IDs (we call it ID-based DSSE (IDDSSE)). In other words, the updates need to be done based on the IDs with relevant keywords without leaking the patient's ID to the server. That is required for adding/removing all the information of a data owner when the data is provided/consent is revoked.
Our scheme supports the addition and deletion of genomic records and related information based on ID, which especially complies with the requirements of deleting information of a particular data owner. In our scheme, physical deletion of data is provided to ensure that data owners are in control of the use of their data. There is no scheme in the literature that offers both addition and instant deletion (when the consent is revoked) based on ID. In this regard, we present two new definitions of forward and backward privacy for IDDSSE, and we formally prove that our scheme is secure under non-adaptive attacks.}
\end{abstract}

\begin{CCSXML}
<ccs2012>
   <concept>
       <concept_id>10002978</concept_id>
       <concept_desc>Security and privacy</concept_desc>
       <concept_significance>500</concept_significance>
       </concept>
   <concept>
       <concept_id>10002978.10002979</concept_id>
       <concept_desc>Security and privacy~Cryptography</concept_desc>
       <concept_significance>500</concept_significance>
       </concept>
   <concept>
       <concept_id>10002978.10003018</concept_id>
       <concept_desc>Security and privacy~Database and storage security</concept_desc>
       <concept_significance>500</concept_significance>
       </concept>
   <concept>
       <concept_id>10002978.10003018.10003020</concept_id>
       <concept_desc>Security and privacy~Management and querying of encrypted data</concept_desc>
       <concept_significance>500</concept_significance>
       </concept>
   <concept>
       <concept_id>10002978.10002991.10002995</concept_id>
       <concept_desc>Security and privacy~Privacy-preserving protocols</concept_desc>
       <concept_significance>500</concept_significance>
       </concept>
 </ccs2012>
\end{CCSXML}

\ccsdesc[500]{Security and privacy}
\ccsdesc[500]{Security and privacy~Cryptography}
\ccsdesc[500]{Security and privacy~Database and storage security}
\ccsdesc[500]{Security and privacy~Management and querying of encrypted data}
\ccsdesc[500]{Security and privacy~Privacy-preserving protocols}

\keywords{Data privacy and security, searchable encryption, encrypted query processing}




\maketitle

\section{Introduction}

The rapid advancements in the genomic data generation and availability have influenced associated scientific studies. These massive genomic datasets enable us to understand the connection between many of diseases and genes. 
For the dataset, which is enormous and requires high computing and storage resources, cloud servers are a significant solution. Moreover, to guarantee participants in the study are aware of its objectives and risks, agree to participate willingly with this information, and have the option to revoke their participation subsequently, dynamic informed consent needs to be considered \cite{kaye2015dynamic}.
Dynamic consent provides opportunities for continuing communication between researchers and study participants, which can have a positive impact on research. Legal challenges are emerging in light of the General Data Protection Regulation (GDPR) \cite{gdpr}, which came into effect in the European Union in May 2018 to safeguard personal data. By adhering to dynamic consent, the GDPR protects study participants' safety without restricting biomedical research. Due to its potential to enable participant involvement in research activities across time with the ability to revoke consent at any time, dynamic consent (DC) has attracted interest \cite{budin2017dynamic,prictor2020dynamic,CIC}.


Genomic information is irreversible and can have stigmatising effects on both individuals and their families. Genomic security and privacy are crucial and must be considered since test results, and genetic data are sensitive. Failing to implement privacy and security precautions while storing sensitive genetic information on a public cloud platform leads to privacy and security problems~\cite{erlich2014redefining,erlich2014routes}.
We assume the data server is in the cloud in our model due to a large amount of genomic data. So, the primary goal of our work is to securely outsource genetic data and perform searches on this data while ensuring privacy protection. In our context, each individual piece of genomic information (including Single nucleotide polymorphisms (SNPs) and phenotype data) belonging to a data owner is treated as a separate keyword associated with their unique identifier, ID. This allows for conducting searches on the various pieces of genomic information as distinct keywords, without revealing the actual data or compromising privacy.
As a result, the cloud cannot infer any information beyond what is permitted from the uploaded data and the conducted query. We maintain the feature of consent consideration and revocation in our model. 


\makeatletter
\newcommand{\ssymbol}[1]{^{\@fnsymbol{#1}}}
\makeatother

\begin{table*}[hb]
\caption{Existing dynamic searchable symmetric encryption schemes comparison}
\label{table:first comparison}
\centering
\resizebox{14cm}{!}{%
\begin{tabular}{ccccccccc}
\toprule
\multirow{2}{*}{Scheme}&
\multicolumn{4}{c}{Deletion}&\multirow{2}{*}{}&\multicolumn{2}{c}{Privacy}&\multirow{2}{*}{Comm. cost$^\S$}\\
\cline{2-5}\cline{7-8} &Approach based on&Type&Instant&Non-interactive$\ssymbol{5}$&&FP/BP&ID&\\
\midrule

\cite{sun2018practical} &  w & Logical &\xmark&\xmark &&FP/BP&\xmark& $\mathcal{O}(x)$ \\

\cite{sun2021practical} &  w & Logical &\xmark&\xmark &&FP/BP&\xmark& $\mathcal{O}(x)$ \\

\cite{stefanov2013practical} &  w & Logical &\xmark&\xmark &&FP&\xmark& $\mathcal{O}(x\log(rx))$ \\

\cite{xu2017dynamic}  &  ID &Physical & \xmark &\cmark&&-$\ssymbol{2}$&\xmark& $\mathcal{O}(1)$ \\

\cite{chen2021bestie} &  w & Physical &\xmark &\xmark&&FP/BP&\hspace{5pt}\cmark$\ssymbol{1}$& $\mathcal{O}(x)$ \\

ACE &  ID & Physical  &\cmark&\cmark &&IDFP/IDBP$\ssymbol{3}$&\cmark & $\mathcal{O}(1)$\\
\bottomrule

\multicolumn{9}{p{16cm}}{Notations: FP: Forward Privacy; BP: Backward Privacy; $x$: Number of keywords of an ID; $r$: Number of records (IDs) in DB; $^\S$: Communication cost is compared for the deletion phase, when the information of an ID needs to be deleted; $\ssymbol{5}$: When the data of an ID is deleted; $\ssymbol{2}$: FP/BP have not been discussed in this paper, it was a concurrent work with Bost et al. \cite{bost2017forward} in which they proposed the formal definitions of BP (based on their defined leakages and the formal definitions of FP and BP, this scheme does not provide FP/BP); $\ssymbol{3}$: Please refer to section \ref{sec:security analysis} for the definitions and more details; $\ssymbol{1}$: Their leakage model does not formalize this privacy.}\\
\end{tabular}
}
\end{table*}


A cryptographic technique that enables searching over encrypted data is known as searchable encryption. Dynamic searchable symmetric encryption (DSSE) is a useful technique for protecting user data stored in the cloud that permits the updating of the encrypted database while retaining searchability. However, additional information is revealed during update procedures, which attackers may exploit \cite{cash2015leakage,blackstone2019revisiting,zhang2016all}. DSSE schemes are expected to uphold two new security concepts, forward privacy and backward privacy, which are introduced by Stefanov et al. \cite{stefanov2013practical}. Bost \cite{bost2016ovarphiovarsigma} and Bost et al. \cite{bost2017forward} provided the formal definitions of forward and backward privacy, respectively.
Nevertheless, most existing forward and backward private DSSE schemes are defined to update the database based on a pair of keyword and ID, meaning an update happens for a particular keyword that a data owner with an ID has (keyword can be a single word, a phrase, or any identifiable piece of information of a data owner with identifier ID).

In addition, there are other key requirements for genome searches that cannot be fully satisfied by existing encrypted search schemes, including compliance with dynamic consent and providing instant non-interactive real deletion of data while offering a practical encrypted search mechanism. The following requirements highlight the actual problems faced in achieving efficient and privacy-preserving genome searches.
To ensure compliance with dynamic consent, it is essential to have the capability to remove all data related to a specific ID from the server when a data owner revokes their consent. Existing encrypted search schemes often lack the ability to perform physical deletion of data, making it difficult to comply with the data owner's right to have their data erased and no longer searched (or even processed) after consent revocation.
Moreover, encrypted search schemes should comply with the requirements outlined in the General Data Protection Regulation (GDPR), which \emph{grants individuals the right to have their personal data erased and no longer processed when the data are no longer necessary for the purposes for which they were collected or processed, and the organisation must stop the processing of individual's data and (delete them) as soon as an individual withdraw their consent (you have the right to have your data erased, without undue delay, by the data controller)} \cite{gdpr,gdprerasure}. 

Therefore, the ability to achieve instant non-interactive real deletion is also crucial for consent revocation. It enables removing data from the server instantly, when the consent is revoked, which ensures the individual's right to have their data erased without delay (the right to erasure). Removing data also happens without relying on the interactive client's involvement, which facilitates the management of large-scale datasets. A delay in removing data exposes it to potential unauthorized access or misuse, increasing the risk of data breaches, unauthorized disclosures, and other privacy breaches.
Instant non-interactive real deletion aligns with this GDPR stipulation, enabling encrypted search schemes to adhere to privacy regulations.


DSSE has been investigated to secure data stored on the cloud server, and for updating a pair of keyword and ID, a token needs to be sent to the server~\cite{ghareh2018new,sun2021practical,zuo2019dynamic}. For updating all the keywords of an ID, all the update tokens need to be generated and sent to the server, which incurs a high communication cost for an ID with large number of keywords. For instance, there are discussions on $\Sigma$o$\varphi$o$\varsigma$ protocol presented in \cite{bost2016ovarphiovarsigma} and the construction in \cite{stefanov2013practical} about supporting deletion of data of an ID. $\Sigma$o$\varphi$o$\varsigma$ \cite{bost2016ovarphiovarsigma} needs a token for each pair of keyword and ID, and \cite{stefanov2013practical} rebuilds the data structure for each keyword the ID has. Moreover, the search complexity in \cite{stefanov2013practical} is more than the number of matched IDs for a keyword. None of them supports physical deletion of data, and they reveal the ID as a leakage in their update phase. Authors of \cite{chen2021bestie} propose a construction named Bestie, which supports real deletion. However, the deletion is for a pair of keyword and ID, and happens at the time of the search on that particular keyword. This means for deleting the information of one ID, different tokens for different keywords need to be generated and sent to the server (this can be viewed as a batch deletion operation). Furthermore, the current system retains the data on the server until a search is conducted using a specific keyword (This can result in a significant delay, sometimes spanning years, or in certain cases, the search may not occur at all). However, this practice is not acceptable, especially in cases where a data owner with a specific ID revokes their consent and explicitly requests the removal of their data from the server (the right to erasure). It is crucial that the data is promptly deleted upon consent revocation, rather than being retained until a search is initiated. Other DSSE schemes such as \cite{zuo2021searchable,kasra2022range,sun2018practical,sun2021practical} presented in the literature also support update based on a keyword and ID pair, that is not physically deleting all the information of an ID in the deletion phase.

Moreover, Table \ref{table:first comparison} details an overview of DSSE schemes to show the behaviour of the schemes in deleting ID information, privacy considerations, and the communication cost of deleting an ID. In more detail, the comparison in the deletion phase shows whether it can happen based on an identifier ID or a keyword w, physically or logically and instantly deleted. Logical deletion means keeping the deleted data on the server, but identifying the deleted entries when a query is performed and not including them in the result set. However, physical deletion requires removing the data from the server. Instant deletion means removing the data when deletion is requested and not keeping it for later phases. The schemes, e.g., \cite{chen2021bestie,stefanov2013practical} where the data is kept on the server and is deleted at other times (that can take a while because a search on the keyword needs to happen for the deletion to be completed) are not ideal for providing consent revocation because once consent is revoked, the user expects the relevant data to be deleted immediately. The scheme proposed in \cite{xu2017dynamic} also keeps part of the data and remove it at later stages when a search is performed. Non-interactive deletion when all the keywords of an ID needs to be removed is provided when one deletion token based on ID is generated. Moreover, forward and backward privacy considerations and ID privacy have been considered for comparison. ID privacy relates to the fact that the identifier of the patients/participants needs to be kept private and not revealed to the server at any time.
Ideally, the system should be able to generate a single update token to minimize the communication cost and be able to update all the keywords of an ID on the server, remove the data physically and instantly when the related consent is revoked. The other desirable goal is to provide privacy for the data and the identifiers, IDs. However, there is no existing scheme to achieve/satisfy all of the mentioned points.

Hence, the contributions of this paper are as follows:
\begin{itemize}[noitemsep,topsep=0pt,leftmargin=10pt]
\remove{ 
\item The deletion happens based on ID, which means one token is needed to be sent to the server to remove the corresponding entries of that ID.
    
    \item Only one deletion token (in comparison with generating a token for each keyword of the ID that needs to be removed when the associated consent is revoked) incurs lower communication costs in our scheme.
    
    \item ACE provides real deletion of data. When the consent is revoked, and the server gets the deletion token, it removes the corresponding entries physically. 
    \item Since our structure supports search based on a keyword and deletion based on an ID, the previous notions of privacy do not fit into our structure. Hence, we define two new notions of privacy when updating based on an ID. We formally defined a hard problem (Decisional ACE or D-ACE) and proved its hardness (reduced to DDH problem). Then, we formally proved ACE is secure under non-adaptive attacks.
    
    \item The proposed scheme has been evaluated on genomic data to show its applicability and performance (in terms of the update and search computation and communication costs, storage overhead) of it. However, it is worth mentioning that the proposed construction is applicable as an ID-based DSSE in the applications where the update based on ID is desired.

}

\item We propose a new construction named ACE that leverages two data structures to support search based on keywords and addition/deletion based on ID. The deletion happens based on ID, which means only one token is needed to be sent to the server to remove the corresponding entries of that ID.
Compared with generating a token for each keyword of the ID that needs to be removed when the associated consent is revoked, ACE incurs lower communication costs for performing a delete operation that takes place in a non-interactive way.

\item Our proposed construction, ACE, provides instant real deletion of data. When the consent is revoked, and the server gets the deletion token, it removes the corresponding entries physically, not just logically. Furthermore, in contrast to other schemes that wait for a search to happen on each keyword to be deleted (which might take years for a particular keyword of an individual), ACE removes data when the consent is revoked, without undue delay. The ability to achieve instant non-interactive real deletion is crucial for data management in encrypted search schemes, and it complies with the requirements outlined in the General Data Protection Regulation (GDPR), enabling ACE to adhere to the privacy regulations.

\item Since our structure enables search based on a keyword and deletion based on an ID, the existing notions of forward and backward privacy, which were defined for mechanisms with search based on a keyword and update based on a keyword and ID pair, are not directly applicable to our structure. Hence, we define two new notions of forward (resp. backward) privacy called IDFP (resp. IDBP), to capture privacy for dynamic SSE with updates based on an ID. Then, we prove that ACE achieves privacy under non-adaptive attacks in the sense of our IDFP (resp. IDBP) notion, assuming the hardness of the Decisional Diffie-Hellman (DDH) problem. Our proof makes use of an intermediate computational problem called Decisional-ACE (D-ACE) which we introduce to aid our analysis, and we prove that the hardness of D-ACE follows from the hardness of DDH.

\item We provide implementation result to evaluate the applicability and performance of our ACE DSSE on genomic data sets, in terms of update and search computation costs, communication costs and storage. We show that ACE provides all the above-mentioned features with high performance. Although designed for genomic data applications, our ACE protocol can also be applied as an ID-based DSSE in other applications where update operations based on ID are required.

\end{itemize}


We acknowledge that ACE has been specifically designed to meet the requirements of efficient privacy-preserving search on encrypted genomic data, including Single nucleotide polymorphisms (SNPs) and phenotype data, while also addressing the need for instant real deletion of data upon consent revocation. It is important to note that this construction can also be applied to other applications that demand search functionality over encrypted data with instant real deletion. Therefore, the contributions of ACE extend beyond genomic data and may be of independent interest in various domains.


\subsection{Related works}
Song et al. \cite{song2000practical} introduced the symmetric key encryption to solve the issue of keyword search across encrypted data, that is known as searchable symmetric encryption (SSE). However, the search time of it is linear to the number of keyword/identifier pairs. Later, Goh \cite{goh2003secure} presented a secure indexing technique, in which the search time is linear with the number of files, to enhance the search efficiency. To further improve the search efficiency, Curtmola et al. \cite{curtmola2006searchable} provided a sublinear search time SSE by using inverted index data structure. Moreover, they also formalized the SSE security model (i.e., Real vs. Ideal), which has been adopted in the subsequent research. Later, many SSE schemes with various enhancements were introduced \cite{rangequery,faber2015rich,cash2013highly,kermanshahi2019multi}.
SSE has also been studied to provide privacy-preserving query execution over genomic databases \cite{jafarbeiki2021privgendb,jafarbeiki2021non,jafarbeiki2022pressgendb}. However, these schemes are not dynamic.

To address the need for updating in searchable symmetric encryption (SSE), dynamic SSE (DSSE) schemes have been proposed \cite{kamara2012dynamic,cash2014dynamic}. However, these approaches can inadvertently leak additional information during updates, which can be exploited by adversaries to compromise data privacy. Alternatively, there are schemes such as \cite{naveed2014dynamic}, where the server functions solely as a transmission and storage entity, resulting in reduced information leakage. However, this approach requires multiple rounds of interaction between the client and server and does not provide instant real deletion of data.
In order to mitigate the extra information leakage in SSE, forward and backward privacy are presented informally by Stefanov et al. \cite{stefanov2013practical}. Bost \cite{bost2016ovarphiovarsigma} has formally defined forward privacy, and the formal backward privacy (Type-I, Type-II, and Type-III) is defined by Bost et al. \cite{bost2017forward}. 
In recent years also, different DSSE schemes with varying features of update and privacy have been proposed in the literature \cite{zuo2021searchable,sun2021practical,sun2018practical}.
However, these mentioned DSSE schemes support updating a pair of keyword and ID. To delete the data of one particular ID, different tokens for the keywords are generated and then sent to the server.

There are also some recent schemes to provide privacy and security of genomic data when queries are performed on this type of dataset, including \cite{jafarbeiki2021privgendb,jafarbeiki2022pressgendb,jafarbeiki2021non} that utilised searchable encryption. However, they have not considered dynamic consent in their schemes.


\subsection{Organization}
The subsequent sections of this paper are as follows. Section \ref{section2} gives the necessary background and preliminaries. Section \ref{section:system model} defines the system model and threat model. In Section \ref{sec:ACE}, our proposed construction is presented in detail with the designed algorithms. Section \ref{sec:security analysis} gives the security analysis of our proposed scheme. The analytical performance comparison and the evaluation results are given in Sections \ref{sec:analytical} and \ref{sec:experiments}, respectively. Finally, Section \ref{conclusion} concludes the work.

\section{Preliminaries}\label{section2}
In this section, the required preliminaries are provided. As general preliminaries, we say an algorithm A is efficient if A runs in probabilistic polynomial time. We say a function f($\lambda$) is negligible, denoted negl($\lambda$), if for every constant $c >0$, there exists $\lambda_0$ such that f($\lambda$)$<1/\lambda^c$ for all $\lambda >  \lambda_0$.

\subsection{Genomic data representation}
An organism's whole genetic information is included in its genome. Double-stranded deoxyribonucleic acid (DNA) molecules, that are made up of two long complementary polymer chains, and are used to encode the genome in humans and many other species. Adenine, Cytosine, Guanine, and Thymine are the four basic units known as nucleotides, and they are represented by the letters A, C, G, and T. In the human genome, there are about 3 billion such letters (base pairs). 
Single nucleotide polymorphisms (SNPs) are variations in the genome when more than one base (A, T, C, or G) is identified in a population. 
Most SNPs are biallelic, with just two possible variants (\emph{alleles}) found. An individual's \emph{genotype} is the set of particular alleles they carry.
SNPs make up a significant part of the genetic variation underlying a number of human traits, including height and susceptibility to disease (also known as \emph{phenotype}).

\subsection{Symmetric Key Encryption}
A symmetric key encryption (SE) consists of the following polynomial-time algorithms SE $=(\mathrm{SE} \cdot$ Enc, SE·Dec$)$:
\begin{itemize}[noitemsep,topsep=0pt,leftmargin=10pt]
    \item $ct \leftarrow \operatorname{SE} \cdot \operatorname{Enc}(k, m)$: On input a secret key $k \in \mathcal{K}$ and a message $m \in \mathcal{M}$, it outputs a ciphertext $c t \in \mathcal{C} \mathcal{T}$, where $\mathcal{K}, \mathcal{M}, \mathcal{C} \mathcal{T}$ are the key space, message space and ciphertext space, respectively.
    \item $m \leftarrow \operatorname{SE} \cdot \operatorname{Dec}(k, c t)$: On input the secret key $k$ and the ciphertext $c t$, it outputs the message $m$.
\end{itemize}
Correctness. An SE scheme is perfectly correct if for all message $m \in \mathcal{M}$, secret key $k \in \mathcal{K}$, and $c t \leftarrow \mathbf{S E} \cdot \operatorname{Enc}(k, m)$, it holds that $\operatorname{Pr}[\operatorname{SE} \cdot \operatorname{Dec}(k, c t)]=1$.

\begin{definition}
    
We say an SE is IND-CPA secure if for every probabilistic polynomial time (PPT) adversary $\mathcal{A}$, its advantage
$$
\begin{gathered}
\operatorname{Adv}_{\mathbf{S E}, \mathcal{A}}^{\mathrm{IND}-\operatorname{CPA}}(\lambda)=\mid \operatorname{Pr}\left[\mathcal{A}\left(\operatorname{SE} \cdot \operatorname{Enc}\left(k, m_0\right)\right)=1\right]- \\
\operatorname{Pr}\left[\mathcal{A}\left(\mathcal{A}\left(\operatorname{SE} \cdot \operatorname{Enc}\left(k, m_1\right)\right)=1\right] \mid\right.
\end{gathered}
$$
is negligible, where the secret key $k \in \mathcal{K}$ is kept secret, and $\mathcal{A}$ chooses $m_0, m_1 \in$ $\mathcal{M}$ with equal length. In addition, $\mathcal{A}$ can adaptively issue a polynomial number of encryption queries. For each $m \in \mathcal{M}$, the challenger returns $c t \leftarrow \operatorname{SE} \cdot \operatorname{Enc}(k, m)$.
\end{definition} 

\subsection{Searchable Symmetric Encryption (SSE)}\label{OXT}
Classical data encryption could resolve rising concerns about the security of data that is being outsourced. But in reality, it is more complicated because the cloud server cannot directly search the encrypted data. As a result, the user must download all the data, decrypt them and then do the search. This issue can be resolved owing to Searchable Encryption (SE), which enables the data owner to store data in the cloud in encrypted form while preserving the ability of server to search through encrypted data. Searchable ciphertexts and search tokens are generated by secret key holder in SSE schemes.

\subsection{Dynamic Searchable Symmetric Encryption}

\begin{definition} (DSSE)
Three protocols define a DSSE scheme $\Sigma$ between the client and the server, including $\Sigma.\textup{Setup}$, $\Sigma.\textup{Update}$, and $\Sigma.\textup{Search}$. Their definitions are as follows:

\begin{itemize}[noitemsep,topsep=0pt,leftmargin=10pt]
  \item Protocol $\Sigma.\textup{Setup}(\lambda):$ The client initializes her secret key $K_{\Sigma}$ and an empty state-set $\sigma$ for the security parameter $\lambda$ and sends an empty encrypted database $\mathsf{E D B}$ to the server. The client retains both her key $K_{\Sigma}$ and state-set $\sigma$ private.

  \item Protocol $\Sigma.\textup{Update}\left(K_{\Sigma}, \sigma,\mathsf{ o p,(w, i d) , {E D B}}\right)$: In this protocol, according to parameter op $\in\{\mathsf{a d d,del}\}$, the client adds a new keyword-and-file-identifier entry $\mathsf{(w, i d)}$ to or deletes an existing entry from the server. Given key $K_{\Sigma}$ and state-set $\sigma$, the client sends a new ciphertext of entry $\mathsf{(w, i d)}$ to the server if $\mathsf{op = add}$; otherwise ($\mathsf{op = del}$), she sends a delete token of entry $\mathsf{(w, i d)}$ to the server. The server updates its database $\mathsf{E D B}$ when it receives the aforementioned message. 
  
  \item Protocol $\Sigma.\textup{Search}\left(K_{\Sigma}, \sigma,\mathsf{ w , {E D B}}\right)$ : Given key $K_{\Sigma}$ and state-set $\sigma$, the client sends a search trapdoor of keyword $\mathsf{w}$ to the server. The server performs the search on the keyword over $\mathsf{E D B}$ and returns all valid file identifiers to the client.

\end{itemize}
\end{definition}

To satisfy DSSE correctness, a DSSE scheme has to always locate all valid file identifiers.

In regards to the security of DSSE, a common approach is to define the indistinguishability between a real game and an ideal game of DSSE. The adversary can issue Update and Search queries in both games. In the real game, all keyword-and-file-identifier entries and secret keys are real, and both protocols $\Sigma.\textup{Update}$ and $\Sigma.\textup{Search}$ are correctly implemented. In the ideal game, the responses to all queries of the adversary are simulated by a simulator that only uses leakage functions. We claim that DSSE is secure if a simulator can simulate an ideal game that is indistinguishable from the real game.

\begin{definition}
(Adaptive Security of DSSE). Given leakage functions $\mathcal{L}=$ $\left(\mathcal{L}^{\textup{Setup}}, \mathcal{L}^{\textup{Update}}, \mathcal{L}^{\textup{Search}}\right)$, a DSSE scheme $\Sigma$ is called $\mathcal{L}$-adaptively secure if for any sufficiently large security parameter $\lambda$ and adversary $\mathcal{A}$, there exists an efficient simulator $\mathcal{S}=(\mathcal{S}$.Setup, $\mathcal{S}$.Update, $\mathcal{S}$.Search $)$ for which $\mid \operatorname{Pr}\left[\operatorname{Real}_{\mathcal{A}}^{\Sigma}(\lambda)=\right.$ $1]-\operatorname{Pr}\left[\right.$Ideal$\left._{\mathcal{A}, \mathcal{S}, \mathcal{L}}^{\Sigma}(\lambda)=1\right] \mid$ is negligible in $\lambda$, where games Real $_{\mathcal{A}}^{\Sigma}(\lambda)$ and Ideal ${ }_{\mathcal{A}, \mathcal{S}, \mathcal{L}}^{\Sigma}(\lambda)$ are defined as below:

\begin{itemize}[noitemsep,topsep=0pt,leftmargin=10pt]
  \item Real$_{\mathcal{A}}^{\Sigma}(\lambda)$ : The real game represents the DSSE protocols. Adversary $\mathcal{A}$ can adaptively issue the queries of Update and Search with inputs $\mathsf{(o p,(w, i d))}$ and $\mathsf{w}$, respectively, and then, observe the real transcripts that are generated by the DSSE protocol. In the end, adversary $\mathcal{A}$ outputs a bit.

  \item Ideal$_{\mathcal{A}, \mathcal{S}, \mathcal{L}}^{\Sigma}(\lambda):$ Simulator $\mathcal{S}$ simulates all transcripts. Adversary $\mathcal{A}$ can issue the same queries as in the real game. The $\mathcal{S}$ takes leakage functions $\mathcal{L}$ as input and simulates the corresponding transcripts. In the end, adversary $\mathcal{A}$ outputs a bit.

\end{itemize}
\end{definition}

Let $\mathsf{Q}$ be a list of all queries (Update and Search), and each entry in $\mathsf{Q}$ has the form of $\mathsf{(u, o p,(w, i d))}$ or $\mathsf{(u, w)}$ for the Update and Search, respectively, where $\mathsf{u}$ represents the time of performing a query. Given a keyword $\mathsf{w}$, let function $\operatorname{sp}(\mathsf{w})$ return all the timestamps of the Search queries on keyword $\mathsf{w}$, and function $\operatorname{TimeDB}(\mathsf{w})$ return the undeleted file identifiers of keyword $\mathsf{w}$ and the history timestamps for adding these files, and function $\operatorname{DelHist}(\mathsf{w})$ return the history timestamps of all paired Add and Delete operations about keyword $\mathsf{w}$. Below are the formal definitions of the aforementioned three functions.

\begin{center}
$
\begin{gathered}
\operatorname{sp}(\mathsf{w})=\{\mathsf{u \mid(u, w) \in Q}\} \\
\operatorname{TimeDB}(\mathsf{w})=\{\mathsf{(u, i d) \mid(u,\text{ add},(w, i d)) \in Q} \\\text{ and } \forall \mathsf{u^{\prime},\left(u^{\prime},  { del},(w, i d)\right) \notin Q}\} \\
\operatorname{DelHist}(\mathsf{w})=\left\{\mathsf{\left(u^{a d d}, u^{ {del }}\right) }\mid \exists \mathsf{i d,\left(u^{a d d},{ add},(w, i d)\right) \in Q}\right. \\
\text { and } \left.\left(\mathsf{u^{\text {del }},  { del },(w, i d)}\right) \in \mathsf{Q}\right\}
\end{gathered}
$
\end{center}

With the above functions, forward and Type-III-backward privacy are defined in Appendix \ref{background}. We introduce our new backward privacy notion IDBP, suitable for SSE with ID-based updates like our ACE construction, in Section \ref{sec:security analysis}.


\subsection{Pseudorandom Function (PRF)}
To encrypt search queries and tokens deterministically in our architecture, we employ PRFs. 
A PRF \cite{prf} is a set of effective functions, where no efficient algorithm can distinguish between a randomly chosen function from the PRF family and a random oracle (a function whose outputs are fixed entirely at random), with a significant advantage. Pseudorandom functions are fundamental tools in the cryptographic primitives construction, and are defined as follows:\par
Let $X$ and $Y$ be sets, $F \colon \{0, 1\} ^{\lambda} \times X \rightarrow Y$ be a function, s$\stackrel{\$}{\leftarrow}$S be the operation of allocating to s a randomly selected element from S, $\operatorname{Fun}(X, Y)$ represent the set of all functions from $X$ to $Y$, $\lambda$ represent the security parameter for PRF, and
$\mathsf{negl}(\lambda)$ denotes a negligible function.
We say that $F$ is a pseudorandom function (PRF) if for all efficient adversaries $\mathcal{A}$, $\operatorname{Adv}_{F, \mathcal{A}}^{\mathrm{prf}}(\lambda)=\operatorname{Pr}[\mathcal{A}^{F(K, \cdot)}(1^{\lambda})=1]-\operatorname{Pr}[\mathcal{A}^{f(\cdot)}(1^{\lambda})=1]\leq \mathsf{negl}(\lambda)$, where the probability is over the randomness of $\mathcal{A}$, $K\stackrel{\$}{\leftarrow}\{0,1\}^{\lambda}, \text { and } f \stackrel{\$}{\leftarrow} \operatorname{Fun}(X, Y)$.


\subsection{Trapdoor Permutations}
A trapdoor permutation $\pi$ is a permutation over a set $\mathcal{D}$ such that $\pi$ can be easily evaluated using a public key (PK), but the efficient computation of the inverse, $\pi^{-1}$, requires the use of a secret key (SK).

More formally, $\pi$ is a trapdoor permutation with the key generation algorithm KeyGen if for every efficient adversary $\mathcal{A}$
$$
\operatorname{Adv}^{\mathrm{ow}}_{\pi, \mathcal{A}}(\lambda) \leq \operatorname{negl}(\lambda)
$$
where
\begin{center}
$
\operatorname{Adv}^{\mathrm{ow}}_{\pi, \mathcal{A}}(\lambda)=\operatorname{Pr}[y \stackrel{\$}{\leftarrow} \mathcal{M},(\mathrm{SK}, \mathrm{PK}) \leftarrow \operatorname{KeyGen}\left(1^\lambda\right),$ $x \leftarrow \mathcal{A}\left(1^\lambda, \mathrm{PK}, y\right): \pi_{\mathrm{PK}}(x)=y]
$
\end{center}
$(\pi$ is one-way) while for every $x \in \mathcal{D}$
$$
\pi_{\mathrm{PK}}\left(\pi_{\mathrm{SK}}^{-1}(x)\right)=x \text { and } \pi_{\mathrm{SK}}^{-1}\left(\pi_{\mathrm{PK}}(x)\right)=x
$$
and $\pi_{\mathrm{PK}}(.)$ and $\pi_{\mathrm{SK}}^{-1}(.)$ is computed in polynomial time. 


\section{System model}\label{section:system model}
\subsection{System model overview}
The proposed model is made up of several components (presented in Figure \ref{fig:system design overview}), including data owner, data provider (trustee), data server (genomic sequence data database), and users (analysts or clinicians). Below is a discussion of their roles:\par
\begin{figure}[htp]
\centering
\includegraphics[scale=0.5]{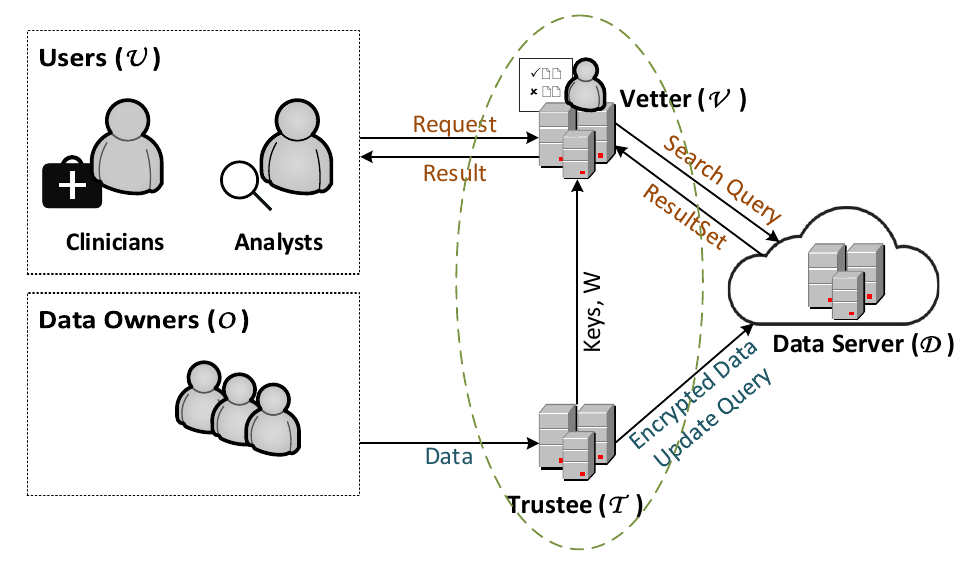}
\caption{System design overview of ACE}
\label{fig:system design overview}
\end{figure}
\textbf{Data owner ($\DO$):}A person whose data is collected is called a data owner. When a data owner attends a medical facility, such as a gene trustee, as a patient or a study participant, her data is taken and recorded while she gives the trustee consent to utilise her genetic data for further studies or treatments. By notifying the trustee, the data owner can subsequently revoke the consent.\par
\textbf{Data provider or Trustee ($\Trustee$):} In our model, a medical institution, like gene trustees, serves as a data provider. 
$\Trustee$ keeps a list of collected genomic data with consent related to them. 
We assume that the data provider is trustworthy. The main responsibilities of this entity are: encoding sequences of genomic data, encrypting the encoded sequences, and managing the cryptographic keys. Moreover, $\Trustee$ is able to insert new genomic data when new data owners provide their samples, and is responsible for removing the genomic data of data owners who revoke their consent.

\textbf{Vetter ($\Vetter$):} There is another Trusted entity that is presented in Figure \ref{fig:system design overview} as a separate entity that also can be combined with the data provider. It receives the keys from the trustee for the search phase, and receives the queries from users and also generates search tokens.
\par
\textbf{Users ($\User$):}
Users send the detailed queries to the trusted entity and wait for the result of the query execution.\par
\textbf{Data Server ($\Server$):} The data server records sequences of genomic data. The $\Server$ executes the encrypted queries on encrypted data and sends back the result. It also stores the newly inserted encrypted data from $\Trustee$ and deletes the requested data based on received update queries from the $\Trustee$.

\subsection{Threat model}
The Data Server ($\mathcal{D}$) should not be able to learn anything regarding the shared genomic data or the unencrypted results of the query that the analysts or clinicians run. This is our ideal security goal.
The Data Server is honest-but-curious (semi-honest) adversary. This proves that $\mathcal{D}$ correctly adheres to the protocol and has no intention of acting intentionally in order to obtain the wrong outcome. However, $\mathcal{D}$ may attempt to obtain additional information than what is anticipated to be obtained during or after the execution of the protocol. We take the trustee to be a trusted entity. Users (Analysts or Clinicians) can be unauthorised, thus they will be authorised by the trustee, that is a trusted entity checking the validity of the query. Finally, we assume that $\mathcal{D}$ and $\User$ do not collude with each other. 
The discussion on the security model and analysis are given in Section \ref{sec:security analysis}.

\section{ACE construction}\label{sec:ACE}

To construct $\ACEPrivGenDB$, we considered the following main ideas. 

 To achieve high search performance, our approach creates searchable ciphertexts in a counter-based manner. By traversing all valid counter values, the counter-based approach enables the server to locate all matching ciphertexts for a keyword.
 This way, the server is able to compute the indices using the counter and decrement it to find the next index. The resulting search complexity is sub-linear with regard to the total number of ciphertexts. This is because the server traverse these computed indices to find matched IDs instead of going through all the indices.
 By considering ${\mathsf{S T}}$ as the parameter that helps in counter-based design, when a search on w1 happens, the server would be able to generate all the ${\mathsf{S T}_{c_1+1}}$ and then ${\mathsf{S T}_{c_1}}$ by using the received ${\mathsf{S T}_{c_1+2}}$ and a token tk=$\mathsf{g^{tag_{w_1}}}$. Therefore, it is able to find the related entries by computing the exact indices using ${\mathsf{S T}}$s and token, tk.
%

 To achieve physical deletion based on ID (when a particular data owner decides to revoke her consent) while ensuring minimal information leakage, we store a set of deltas ($\Delta$) for each ID and issue a token that can be used to generate all the indices related to the ID that the data owner expects to delete. This way, one token for deletion is generated and there is no need for a high communication cost of generating and sending all tokens of all keywords (for an ID) for deletion. 
 The deletion token is based on an $\DOID$, $r_\DOID$ that extracts the deltas of the $\DOID$ and lets the server compute the indices in the ISet using deltas and $\mathsf{tag}_\DOID$. This way, the server can find all the entries in FSet and ISet related to that particular ID to really delete the corresponding ciphertexts. That is why there is no need for sending different tokens to delete all the relative entries of an ID. 

 To achieve ID-based forward privacy (IDFP, defined in section \ref{sec:security analysis}), ACE uses trapdoor permutation ($\mathsf{ST}$) and does not let new insertions to be related to the previous search tokens after insertion. To achieve ID-based backward privacy (IDBP, defined in section \ref{sec:security analysis}), it encrypts all the IDs such that the server learns nothing about the deleted IDs. Since it supports real deletion and the IDs with revoked consents are deleted in the scheme, no deleted ID will be returned whenever a corresponding search query is executed.


\subsection{Notations}
Frequently used notations in this paper are listed in Table \ref{table:notations}.

\begin{table}[htbp]
\caption{Notations}\label{table:notations}
\centering
\begin{tabular}{>{\centering\arraybackslash} m{0.2\linewidth}  m{0.6\linewidth}}
\hline
\toprule
\textbf{Notation} & \textbf{Description} \\
\midrule
$\DOID$& Data Owner's unique $\mathsf{ID}$\\\hline $\mathsf{ID}'$& Encrypted Data Owner's $\mathsf{ID}$\\\hline $\mathsf{w}$& A keyword\\\hline $\DBGePh$($\mathsf{w}$)& The set of Data Owner IDs that contain that particular $\mathsf{w}$ \\\hline
$\mathbf{W}_{\DOID}$& The set of keywords the data owner (with $\DOID$) has \\\hline
$\DBGePh$& Genomic $\mathsf{D}$ata$\mathsf{B}$ase; a set of $\{\DOID_i,\mathbf{W}_{\DOID_i}\}$\\\hline $\mathsf{E}\DBGePh$& Encrypted Genomic $\mathsf{D}$ata$\mathsf{B}$ase\\
\bottomrule
\end{tabular}
\end{table}

\subsection{Construction}


The detailed description of the algorithms of ACE are as follows:
\newcounter{para}
\newcommand\mypara{\par\refstepcounter{para}\thepara)\space}
\mypara{$\mathsf{Setup(\lambda})$}: This process is presented in Algorithm \ref{alg: Setup}.
The Trustee $\Trustee$ runs this algorithm. 
On input the security parameter $\lambda$, $\Trustee$ executes this algorithm and outputs the empty map and dictionary $\mathsf{EGDB=}$ $\{\mathsf{EGDB1,}$ $\mathsf{EGDB2}\}$, an empty map $\mathbf{W}[\mathsf{w}]$ along with the set of keys, $\mathsf{K}$.
    It selects random keys $\K_S, \K_1$ for PRF $F$ and $\K_T,\K_2$ for PRF $F_p$ and the generator $\mathsf{g}\stackrel{\$}{\leftarrow}\mathbb{G}$. It also generates a set of ($\mathsf{SK,PK}$) for $\pi$ using KeyGen algorithm of the trapdoor permutation. 
The $\mathsf{EGDB1}$ stores deltas (that are used for generating tokens for deletion) for each $\DOID$, and the $\mathsf{EGDB2}$ dictionary contains searchable ciphertexts in a counter-based design with the encrypted $\DOID$s.
The $\mathsf{EGDB}$ is stored on the $\Server$, and the relevant keys (for search and retrieve) and a map $\mathbf{W}$ are passed to the $\Vetter$ to produce search tokens. $\Trustee$ keeps all the keys to itself for update phases.


\begin{algorithm}[hbt!]
\caption{\textbf{$\ACEPrivGenDB.\mathsf{Setup}$}}
        \label{alg: Setup}

 \begin{algorithmic}[1]

\State $\Trustee$ select keys $\K_S, \K_1$ for PRF $F$ and ($\mathsf{SK, PK}$) for $\pi $ and keys $\K_T,~\K_2$ for PRF $F_p$ (with range in $\mathbb{Z}^*_p$) and $k_h$ for keyed hash function H
using security parameter $\lambda$, and $\mathbb{G}$ a group of prime order $p$ and generator $\mathsf{g}$.
\State Initialise empty maps $\mathbf{W}[\mathsf{w}], \mathsf{FSet}$ and empty dictionary $\mathsf{ISet}$

\remove{
 \State Parse $\DBGePh$ as $({\DOID}_i, {\mathsf{w}}_j)$ 

\For {each ${\mathsf{w}}$} 
\State $\mathsf{tag}_{\mathsf{w}} \gets F_p(\K_T,{\mathsf{w}})$; $\K_{\mathsf{w}} \gets F(\K_S,{\mathsf{w}})$.
\State Set a counter $c\gets 1$ and select $ \mathsf{S T_{0}} \stackrel{\$}{\leftarrow} \mathcal{M}$
\For {${\DOID}_i \in \DBGePh({\mathsf{w}})$}
\If {there is no index $r_{\DOID}$ in $\mathsf{FSet}$ for ${\DOID}_i$}
\State Compute index $r_{{\DOID}_i}\gets F(\K_1,{\DOID}_i)$
and a tag $\mathsf{tag}_{{\DOID}_i} \gets F_p(\K_2,{{\DOID}_i})$
\EndIf
\State Compute $\DOID' \gets {E}(\K_{\mathsf{w}},\DOID)$ 

\State $ \mathsf{S T}_{c} \leftarrow \pi_{\mathsf{S K}}^{-1}\left(\mathsf{S T}_{c-1}\right)$

\State $\ell \leftarrow H(k_h,\mathsf{g}^{{\mathsf{S T}_{c}}\cdot {\mathsf{tag}_{\mathsf{w}}}})$ 

\State Append $\DOID'$ to ${\bf \mathsf{ISet}[\ell]}$ and $c\gets c+1$\newline
\State Compute $\Delta \gets \mathsf{g}^{{\mathsf{S T}_{c}}\cdot {\mathsf{tag}_{\mathsf{w}}}/{\mathsf{tag}_{{\DOID}_i}}}$ 


\State Append $\Delta$ into ${{ \mathsf{FSet}}[r_{{\DOID}_i}]}$
\EndFor
\State $\mathbf{W}[\mathsf{w}] \leftarrow\left(\mathsf{S T}_{c}, c\right)$
\EndFor

}

\State $\mathsf{E}\DBGePh1 =\mathsf{FSet}, \mathsf{E}\DBGePh2 = \mathsf{ISet}$.
\State \Return{$\mathsf{EGDB}=(\mathsf{E}\DBGePh1,\mathsf{E}\DBGePh2)$} //Stored on $\Server$; $\mathbf{W}[\mathsf{w}]$
 and {$\K_v=(\K_S, \K_T)$} //Sent to $\Vetter$; $\mathsf{SK},\mathbf{W}[\mathsf{w}],\K_t=(\K_S, \K_T, \K_1, \K_2)$ //Stored on $\Trustee$; $\mathsf{PK,g},p,k_h$ are public.
\end{algorithmic}
\end{algorithm}


\begin{algorithm}[hbt!]
\caption{\textbf{$\ACEPrivGenDB.\mathsf{Update}$}}
        \label{alg: token}
\underline{
Add a set of $\DOID$s with their keywords, $\{\DOID_i,\mathbf{W}_{\DOID_i}\}$  (batch insertion)
}
 \begin{algorithmic}[1]

\State $\Trustee$ Parses the set as $\mathsf{GDB}=({\DOID}_i, {\mathsf{w}}_j)$ of $\DOID$ and keyword pairs and also generates $\mathsf{GDB(w)}$ for all distinct keywords $\mathsf{w}$ in $\mathsf{GDB}$, and updates $\mathsf{E}\DBGePh1 =\mathsf{FSet}, \mathsf{E}\DBGePh2 = \mathsf{ISet}$ as follows
\For {each distinct ${\mathsf{w}}\in \mathsf{GDB}$} 
\State $\mathsf{tag}_{\mathsf{w}} \gets F_p(\K_T,{\mathsf{w}})$; $\K_{\mathsf{w}} \gets F(\K_S,{\mathsf{w}})$.//specific ${\DOID}_i$

\State $\left(\mathsf{S T}_{c}, c\right) \leftarrow \mathbf{W}[\mathsf{w}]$
\If {$\left(\mathsf{S T}_{c}, c\right)=\perp$}
\State $\quad \mathsf{S T_{0}} \stackrel{\$}{\leftarrow} \mathcal{M}, c \leftarrow0$
\EndIf

\For {${\DOID}_i \in \DBGePh({\mathsf{w}})$}
\If {there is no index $r_{\DOID}$ in $\mathsf{FSet}$ for ${\DOID}_i$}
\State Compute index $r_{{\DOID}_i}\gets F(\K_1,{\DOID}_i)$
and a tag $\mathsf{tag}_{{\DOID}_i} \gets F_p(\K_2,{{\DOID}_i})$
\EndIf
\State Compute $\DOID'_i \gets {E}(\K_{\mathsf{w}},\DOID_i)$ 
\State $c\gets c+1$
\State $ \mathsf{S T}_{c} \leftarrow \pi_{\mathsf{S K}}^{-1}\left(\mathsf{S T}_{c-1}\right)$; $ \mathsf{S T'}_{c} \leftarrow \left(\mathsf{S T}_{c}\text{ mod }p\right)$
\State $\ell \leftarrow H(k_h,\mathsf{g}^{{\mathsf{S T'}_{c}}\cdot {\mathsf{tag}_{\mathsf{w}}}})$ 

\State Append $\DOID'_i$ to ${\bf \mathsf{ISet}[\ell]}$ on $\Server$ 

\State Compute $\Delta \gets \mathsf{g}^{{\mathsf{S T'}_{c}}\cdot {\mathsf{tag}_{\mathsf{w}}}/{\mathsf{tag}_{{\DOID}_i}}}$ 

\State Append $\Delta$ into ${{ \mathsf{FSet}}[r_{{\DOID}_i}]}$ on $\Server$

\EndFor

\State $\mathbf{W}[\mathsf{w}] \leftarrow\left(\mathsf{S T}_{c}, c\right)$ //gets updated on $\Trustee$ and $\Vetter$
\EndFor


\end{algorithmic}
\underline{Delete all entries for a particular $\DOID_i$}
 \begin{algorithmic}[1]

\State $\Trustee$ computes $\mathsf{tag}_{{\DOID}_i} \gets F_p(\K_2,{{\DOID}_i})$, $r_{{\DOID}_i}\gets F(\K_1,{\DOID}_i)$ and sends to the $\Server$ 
\newline
$\Server$ performs:
\For {all elements $\Delta_i$ in $\mathsf{FSet[r_{{\DOID}_i}]}$}
\State Compute $\ell \gets H(k_h,{\Delta_i}^{{\mathsf{tag}_{{\DOID}_i}}})$
\State Remove corresponding entry from $\mathsf{ISet}[\ell]$ and $\mathsf{\ell}$
\EndFor
\State Remove entries of $\mathsf{FSet[r_{{\DOID}_i}}]$ and $\mathsf{r_{{\DOID}_i}}$
\end{algorithmic}
\end{algorithm}


\mypara{$\mathsf{Update}(\{\DOID_i,\mathbf{W}_{\DOID_i}\},$ $\mathsf{op=add}, X)$ or $\mathsf{Update}(\DOID,$ $\mathsf{op=del}, X)$, where $X=\{\mathsf{SK},\mathbf{W}[\mathsf{w}],\K_t=(\K_S, \K_T, \K_1, \K_2)$ that are stored on $\Trustee$, $\mathsf{EGDB}\}$:}
Based on the operation, $\mathsf{op}$, needed to be performed, either add or delete an ID with its corresponding keywords ($\DOID,\mathbf{W}_{\DOID}$), different steps take place by Trustee $\Trustee$. In ACE, the term update-add refers to the scenario where the data of several new Data Owners are provided to the Trustee (batch insertion), while update-del refers to the situation where a Data Owner revokes their consent and requests the removal of their data. 

Since we have a batch insertion in ACE, if a set of IDs with relevant keywords need to be added, for all the keywords the relative counter is retrieved from the map $\mathbf{W}$ and if it is empty, a random element for $\mathsf{ST_0}$ is selected.
For every $\mathsf{w}$ in the dataset, a tag and a key for encrypting the $\DOID$ are generated. For all the $\DOID$s that have the keyword $\mathsf{w}$ an index $r_{\DOID}$ and a tag $\mathsf{tag}_{\DOID}$ are generated.
To generate the dictionary which has the counter-based search capability $\mathsf{ST}_{c}$ is used, that acts as a counter. In the pseudo code, $\pi$ is a trapdoor permutation and $\mathsf{ST}_{c}$ can be generated by using the secret key of the trapdoor permutation and $\mathsf{ST}_{c-1}$. Then, an index $\ell$ which is based on counter ($\mathsf{ST}_{c}$ mod $p$) and $\mathsf{w}$ ($\mathsf{tag_w}$) is generated and the relevant encrypted ID is appended to the dictionary with index $\ell$. We use mod $p$ to be able to perform the operation in group $\mathbb{G}$ of prime order $p$. These indices $\ell$ and the corresponding encrypted IDs create the $\mathsf{ISet} $ that is considered as the $\mathsf{EGDB2}$. $\mathsf{EGDB1}$ is a map that stores different deltas, $\Delta$, for a particular ID. In this case, when looking for an ID, the corresponding deltas will be retrieved which are based on counter ($\mathsf{ST}_{c}$), $\mathsf{w}$ ($\mathsf{tag_w}$), ID ($\mathsf{tag_{\DOID}}$). This way, when a search token is sent to the server, it would not be able to calculate the indices in $\mathsf{ISet}$ using deltas and find the correlation of deltas and indices in the $\mathsf{ISet}$.
On the Vetter $\Vetter$ and Trustee $\Trustee$ sides, $\mathbf{W}$ maps every inserted keyword to its current $\mathsf{ST}_{c}$ and to a counter $c$. Every time a new document matching $\mathsf{w}$ is inserted, $\mathbf{W}[\mathsf{w}]$ gets incremented. 
So, $\mathsf{FSet}$ and $\mathsf{ISet}$ are computed and stored on the data server and new $\mathsf{ST}$ and counter $c$ are stored in $\mathbf{W}$.

For deleting an ID when the consent is revoked, a tag for that particular ID is generated by Trustee $\Trustee$ and sent to the Server $\Server$. Accordingly, the $\Server$ retrieve the deltas in $\mathsf{FSet}$ and starts computing the corresponding indices in the $\mathsf{ISet}$ using deltas and the received token. 
After computation and searching for these indices, all the relative entries in $\mathsf{FSet,ISet}$ are removed by the $\Server$.


\remove{

\begin{table*}[ht]
\caption{Example of $\mathsf{FSet}$ and $\mathsf{ISet}$ in $\ACEPrivGenDB$ assuming $\DOID_1$ has keywords $\mathsf{w1,w2}$, $\DOID_2$ has keywords $\mathsf{w1,w3}$, and $\DOID_3$ has keywords $\mathsf{w1,w3}$}\label{table:example}
\centering
\begin{tabular}{lll|ll|}
\cline{1-2} \cline{4-5}
\multicolumn{2}{|c|}{$\mathsf{FSet}$}                                                          &  & \multicolumn{2}{c|}{$\mathsf{ISet}$}                \\ \cline{1-2} \cline{4-5} 
\multicolumn{1}{|l|}{$r_{\DOID_1}$} & \multicolumn{1}{l|}{$\mathsf{g}^{{\mathsf{S T}_{c_1}}\cdot {\mathsf{tag}_{\mathsf{w}_1}}/{\mathsf{tag}_{{\DOID}_1}}}$, $\mathsf{g}^{{\mathsf{S T}_{c_2}}\cdot {\mathsf{tag}_{\mathsf{w}_2}}/{\mathsf{tag}_{{\DOID}_1}}}$}       &  & \multicolumn{1}{l|}{$H(k_h,\mathsf{g}^{{\mathsf{S T}_{c_1}}\cdot {\mathsf{tag}_{\mathsf{w}_1}}})$}   & ${E}(\K_{\mathsf{w_1}},\DOID_1)$ \\ \cline{1-2} \cline{4-5} 
\multicolumn{1}{|l|}{$r_{\DOID_2}$} & \multicolumn{1}{l|}{$\mathsf{g}^{{\mathsf{S T}_{c_1+1}}\cdot {\mathsf{tag}_{\mathsf{w}_1}}/{\mathsf{tag}_{{\DOID}_2}}}$, $\mathsf{g}^{{\mathsf{S T}_{c_3}}\cdot {\mathsf{tag}_{\mathsf{w}_3}}/{\mathsf{tag}_{{\DOID}_2}}}$}     &  & \multicolumn{1}{l|}{$H(k_h,\mathsf{g}^{{\mathsf{S T}_{c_2}}\cdot {\mathsf{tag}_{\mathsf{w}_2}}})$}   & ${E}(\K_{\mathsf{w_2}},\DOID_1)$ \\ \cline{1-2} \cline{4-5} 
\multicolumn{1}{|l|}{$r_{\DOID_3}$} & \multicolumn{1}{l|}{$\mathsf{g}^{{\mathsf{S T}_{c_1+2}}\cdot {\mathsf{tag}_{\mathsf{w}_1}}/{\mathsf{tag}_{{\DOID}_3}}}$,   $\mathsf{g}^{{\mathsf{S T}_{c_3+1}}\cdot {\mathsf{tag}_{\mathsf{w}_3}}/{\mathsf{tag}_{{\DOID}_3}}}$} &  & \multicolumn{1}{l|}{$H(k_h,\mathsf{g}^{{\mathsf{S T}_{c_1+1}}\cdot {\mathsf{tag}_{\mathsf{w}_1}}})$} & ${E}(\K_{\mathsf{w_1}},\DOID_2)$ \\ \cline{1-2} \cline{4-5} 
                          &                                                         &  & \multicolumn{1}{l|}{$H(k_h,\mathsf{g}^{{\mathsf{S T}_{c_3}}\cdot {\mathsf{tag}_{\mathsf{w}_3}}})$}   & ${E}(\K_{\mathsf{w_3}},\DOID_2)$ \\ \cline{4-5} 
                          &                                                         &  & \multicolumn{1}{l|}{$H(k_h,\mathsf{g}^{{\mathsf{S T}_{c_1+2}}\cdot {\mathsf{tag}_{\mathsf{w}_1}}})$} & ${E}(\K_{\mathsf{w_1}},\DOID_3)$ \\ \cline{4-5} 
                          &                                                         &  & \multicolumn{1}{l|}{$H(k_h,\mathsf{g}^{{\mathsf{S T}_{c_3+1}}\cdot {\mathsf{tag}_{\mathsf{w}_3}}})$} & ${E}(\K_{\mathsf{w_3}},\DOID_3)$ \\ \cline{4-5} 
\end{tabular}
\end{table*}


}


\mypara{$\mathsf{Search}(\mathbf{W}[\mathsf{w}],\K_v,\mathsf{w},$ $\mathsf{E}\DBGePh_2)$}: 
The Vetter $\Vetter$ generates a token for the search and also retrieves the corresponding counter and $\mathsf{ST}$ from map $\mathbf{W}$ to send to the $\Server$ for the search process.
The $\Server$ starts computing the indices in the $\mathsf{ISet}$ based on the counter (using trapdoor permutation and its public key) and retrieves the encrypted IDs. The whole process is described in Algorithm \ref{alg:search}. $\Server$ creates an empty set $\mathsf{RSet}$ to put related encrypted IDs ($\DOID')$ matched the query in it. Then, the $\Vetter$ gets the $\mathsf{RSet}$, and generates the key for decrypting the retrieved $\DOID'\in\mathsf{RSet}$ by using $\K_v,\mathsf{w}$ (Dec is the decryption algorithm).


\begin{algorithm}[hbt!]
\caption{\textbf{$\ACEPrivGenDB.\mathsf{Search}$}}
        \label{alg:search}

\begin{algorithmic}[1]

\State $\Vetter$ computes $\mathsf{tag}_{\mathsf{w}} \gets F(\K_T,{\mathsf{w}})$, $\mathsf{tk}\gets \mathsf{g}^{\mathsf{tag}_{\mathsf{w}}}$ and gets $\left(\mathsf{S T}_{c}, c\right) \leftarrow \mathbf{W}[\mathsf{w}]$

\If {$\left(\mathsf{S T}_{c}, c\right)=\perp$}
\State  return $\emptyset$
\EndIf
\State Send $\left(\mathsf{tk}, \mathsf{S T}_{c}, c\right)$ to the $\Server$.\newline
$\Server$ performs the following on $\mathsf{E}\DBGePh2 = \mathsf{ISet}$:
\State $\mathsf{RSet}$ $\gets \{\}$
\For {$i=c$ to 1}
\State $ \ell \leftarrow H(k_h,\mathsf{tk}^{({\mathsf{S T}_{i}}\text{ mod }p)})$
\State $ \DOID' \leftarrow \mathsf{ISet}\left[\ell\right]$//skips this if the entry is removed
\State $\mathsf{RSet}$ $\gets \mathsf{RSet}\cup \DOID'$
\State $\mathsf{S T}_{i-1} \leftarrow \pi_{\mathrm{PK}}\left(\mathsf{S T}_{i}\right)$
\EndFor

\State \Return{$\mathsf{RSet}$} to $\Vetter$\newline
$\Vetter$ performs the following

  \State $\Vetter$ defines $\mathsf{IDSet}$ $\gets \{\}$ and performs the following:
  \State Sets $\K_{\mathsf{w}} \gets F(\K_S, \mathsf{w})$
  \For {each $\DOID' \in \mathsf{RSet}$}
  \State Compute $\DOID \gets Dec(\mathsf{K}_{\mathsf{w}}, \DOID')$
  \State $\mathsf{IDSet} \gets$ $\mathsf{IDSet}\cup \{\DOID\}$
  \EndFor
\State \Return $\mathsf{IDSet}$

\end{algorithmic}

\end{algorithm}

\remove{

\mypara{$\mathsf{Retrieve}(\mathsf{RSet}, \mathsf{w}, \K_v)$}: As shown in Algorithm \ref{alg: retrieve}, the $\Vetter$ generates the key for decrypting the retrieved $\DOID'\in\mathsf{RSet}$ by using $\K_v,\mathsf{w}$ (Dec is the decryption algorithm).

\begin{algorithm}[hbt!]
\caption{\textbf{$\ACEPrivGenDB.\mathsf{Retrieve}$}}
        \label{alg: retrieve}

 \begin{algorithmic}[1]
  \State $\Vetter$ defines $\mathsf{IDSet}$ $\gets \{\}$ and performs the following:
  \State Sets $\K_{\mathsf{w}} \gets F(\K_S, \mathsf{w})$
  \For {each $\DOID' \in \mathsf{RSet}$}
  \State Compute $\DOID \gets Dec(\mathsf{K}_{\mathsf{w}}, \DOID')$
  \State $\mathsf{IDSet} \gets$ $\mathsf{IDSet}\cup \{\DOID\}$
  \EndFor
\State \Return $\mathsf{IDSet}$
\end{algorithmic}
\end{algorithm}

}

\subsection{An Example of Stored Data in ACE}
Table \ref{table:example} shows an example of the stored data on data server. Stored deltas in FSet where $\Delta_{ij}$ ($i$ determines the $\DOID_i$ and $j$ determines $\mathsf{w}_j$) help with the deletion of an ID's data without revealing any relationship between the entries of FSet and ISet before deletion. The deltas are generated using $\mathsf{ST}$ acting as counters (they provide privacy features that are discussed in details in section \ref{sec:security analysis}), $\mathsf{tag_w}$ related to the keyword $\mathsf{w_j}$ and $\mathsf{tag_\DOID}$ related to the $\mathsf{\DOID_i}$. The indices in ISet can be generated using deltas and $\mathsf{tag_\DOID}$. This relationship between entries in FSet and ISet is not computable by server; unless a deletion of $\DOID$ needs to happen. 


\begin{table*}[htb]
\caption{Example of $\mathsf{FSet}$ and $\mathsf{ISet}$ in $\ACEPrivGenDB$ assuming $\DOID_1$ has keywords $\mathsf{w1,w2}$, $\DOID_2$ has keywords $\mathsf{w1,w3}$, and $\DOID_3$ has keywords $\mathsf{w1,w3}$}\label{table:example}
\centering
\begin{tabular}{lll|ll|}
\cline{1-2} \cline{4-5}
\multicolumn{2}{|c|}{$\mathsf{FSet}$}                                                          &  & \multicolumn{2}{c|}{$\mathsf{ISet}$}                \\ \cline{1-2} \cline{4-5} 
\multicolumn{1}{|l|}{$r_{\DOID_1}$} & \multicolumn{1}{l|}{$\Delta_{11}=\mathsf{g}^{{\mathsf{S T}_{c_1}}\cdot {\mathsf{tag}_{\mathsf{w}_1}}/{\mathsf{tag}_{{\DOID}_1}}}$, $\Delta_{12}=\mathsf{g}^{{\mathsf{S T}_{c_2}}\cdot {\mathsf{tag}_{\mathsf{w}_2}}/{\mathsf{tag}_{{\DOID}_1}}}$}       &  & \multicolumn{1}{l|}{$H(k_h,\mathsf{g}^{{\mathsf{S T}_{c_1}}\cdot {\mathsf{tag}_{\mathsf{w}_1}}})$}   & ${E}(\K_{\mathsf{w_1}},\DOID_1)$ \\ \cline{1-2} \cline{4-5} 
\multicolumn{1}{|l|}{$r_{\DOID_2}$} & \multicolumn{1}{l|}{$\Delta_{21}=\mathsf{g}^{{\mathsf{S T}_{c_1+1}}\cdot {\mathsf{tag}_{\mathsf{w}_1}}/{\mathsf{tag}_{{\DOID}_2}}}$, $\Delta_{23}=\mathsf{g}^{{\mathsf{S T}_{c_3}}\cdot {\mathsf{tag}_{\mathsf{w}_3}}/{\mathsf{tag}_{{\DOID}_2}}}$}     &  & \multicolumn{1}{l|}{$H(k_h,\mathsf{g}^{{\mathsf{S T}_{c_2}}\cdot {\mathsf{tag}_{\mathsf{w}_2}}})$}   & ${E}(\K_{\mathsf{w_2}},\DOID_1)$ \\ \cline{1-2} \cline{4-5} 
\multicolumn{1}{|l|}{$r_{\DOID_3}$} & \multicolumn{1}{l|}{$\Delta_{31}=\mathsf{g}^{{\mathsf{S T}_{c_1+2}}\cdot {\mathsf{tag}_{\mathsf{w}_1}}/{\mathsf{tag}_{{\DOID}_3}}}$,   $\Delta_{33}=\mathsf{g}^{{\mathsf{S T}_{c_3+1}}\cdot {\mathsf{tag}_{\mathsf{w}_3}}/{\mathsf{tag}_{{\DOID}_3}}}$} &  & \multicolumn{1}{l|}{$H(k_h,\mathsf{g}^{{\mathsf{S T}_{c_1+1}}\cdot {\mathsf{tag}_{\mathsf{w}_1}}})$} & ${E}(\K_{\mathsf{w_1}},\DOID_2)$ \\ \cline{1-2} \cline{4-5} 
                          &                                                         &  & \multicolumn{1}{l|}{$H(k_h,\mathsf{g}^{{\mathsf{S T}_{c_3}}\cdot {\mathsf{tag}_{\mathsf{w}_3}}})$}   & ${E}(\K_{\mathsf{w_3}},\DOID_2)$ \\ \cline{4-5} 
                          &                                                         &  & \multicolumn{1}{l|}{$H(k_h,\mathsf{g}^{{\mathsf{S T}_{c_1+2}}\cdot {\mathsf{tag}_{\mathsf{w}_1}}})$} & ${E}(\K_{\mathsf{w_1}},\DOID_3)$ \\ \cline{4-5} 
                          &                                                         &  & \multicolumn{1}{l|}{$H(k_h,\mathsf{g}^{{\mathsf{S T}_{c_3+1}}\cdot {\mathsf{tag}_{\mathsf{w}_3}}})$} & ${E}(\K_{\mathsf{w_3}},\DOID_3)$ \\ \cline{4-5} 
\end{tabular}
\end{table*}



\section{Security analysis}\label{sec:security analysis}
%

The real world versus ideal world formalization is used in the SSE scheme's confidentiality definition, and a leakage function that describes the information the protocol leaks to the adversary parametrizes it. The definition makes sure that the scheme only leaks data that is directly inferrable from the leakage function.

More precisely, the security definition of the proposed constructions is formulated by two games; $\operatorname{Real}_{\mathcal{A}}^{\Pi}(\lambda)$ and $\operatorname{Ideal}_{\mathcal{A}, \mathcal{S}}^{\Pi}(\lambda)$. The former is executed using our scheme, whereas the latter is simulated using the leakage of our scheme. The leakage is parameterised by a function $\mathcal{L}=\left(\mathcal{L}^{S t p}, \mathcal{L}^{S r c h}, \mathcal{L}^{U p d t}\right)$, which describes what information is leaked to the adversary $\mathcal{A}$. If an adversary such as $\mathcal{A}$ cannot distinguish these two games, then we can say that there is no leakage beyond what is defined in the leakage function.

To enable us to handle ID-based deletion queries in our security reduction of ACE,  we define a non-adaptive security model where some information about the adversary's queries are defined by the adversary in the beginning of the game using a data structure called query info. We define query-info to be a set of queries defined by adversary in advance. This set includes: IDs to be deleted, keywords of those IDs to be searched before deletion (from this information, a set called $S_{i}\text{ with }\{t_{Srch}<t_{Del}\}$ for each ID$_i$ can be created that includes the keywords of that ID that are searched before being deleted). The update-add queries are not included in query-info if the added IDs are not in the to be deleted list of IDs.\par
\begin{center}
query-info$=\{(\DOID_i,S_i)|\DOID_i\text{ will be deleted and}$\\ $S_i=\text{set of }\mathsf{w}\in\mathbf{W}_{\DOID_i}\text{ with } t_{Srch}<t_{Del}\}$\par
\end{center}
The games can be formally defined as followed;\par
- $\operatorname{Real}_{\mathcal{A}}^{\Sigma}(\lambda)$ : On input a dataset and query-info chosen by the adversary $\mathcal{A}$, it outputs $\mathrm{EGDB}$ by using the real algorithms (Setup, Update-add) to $\mathcal{A}$. The adversary can perform the search and update-del queries in query-info and other search and update-add queries. The game outputs the results generated by running Search and Update to $\mathcal{A}$. Eventually, $\mathcal{A}$ outputs a bit.\par
- $\operatorname{Ideal}_{\mathcal{A}, \mathcal{S}}^{\Sigma}(\lambda)$ : On input a dataset and query-info chosen by $\mathcal{A}$, it uses a simulator $\mathcal{S}\left(\mathcal{L}^{S t p},\mathcal{L}^{U p d t}\right)$ to output $\mathrm{EGDB}$ to the adversary $\mathcal{A}$. Then, it simulates the results for the search query using the leakage function $\mathcal{S}\left(\mathcal{L}^{\text {Srch }}\right)$ and uses $\mathcal{S}\left(\mathcal{L}^{U p d t}\right)$ to simulate the results for update (add or delete) query and uses query-info (that is defined in advance by $\mathcal{A}$) when simulating the results for add queries. Eventually, $\mathcal{A}$ outputs a bit.\par

\begin{definition}

(Security w.r.t. Server). The protocol $\Pi$ is $\mathcal{L}$ semantically secure against non-adaptive attacks if for all $P P T$ adversaries $\mathcal{A}$, there exists a PPT simulator $\mathcal{S}$, such that
$$
\left|\operatorname{Pr}\left[\operatorname{Real}_{\mathcal{A}}^{\Pi}(\lambda)=1\right]-\operatorname{Pr}\left[\operatorname{Ideal}_{\mathcal{A}, \mathcal{S}}^{\Pi}(\lambda)\right]\right| \leq \operatorname{negl}(\lambda)
$$
\end{definition}

The security of our scheme can be proven in the Random Oracle Model (we show the security of this construction when H is modeled as a random oracle).

\subsection{Security Assumptions}
In this section, we define a hard problem, named D-ACE, that facilitates the proof of our theorem. We formally prove that D-ACE is a hard problem. Otherwise, DDH problem can be solved (a reduction from DDH to D-ACE is presented).

\begin{definition}
(Multi-instance DDH problem). Let $\mathbb{G}$ be a cyclic group of prime order $p$, the multi-instance decisional Diffie-Hellman (DDH) problem is to distinguish the ensemble
 $\{(\mathsf{g}, \mathsf{g}^{r_i}, \mathsf{g}^{s_j}, \mathsf{g}^{{r_i} {s_j}})\}_{i,j}$ from $\{(\mathsf{g}, \mathsf{g}^{r_i}, \mathsf{g}^{s_j}, \mathsf{g}^{z_{i,j}})\}_{i,j}$ with independent uniform $z_{i,j}$s, where $i=1,\ldots,m$ and $j = 1,\ldots,n$, for some $m,n$ polynomial in security parameter $\lambda$,
$\mathsf{g} \in \mathbb{G}$ and $r_i, s_j, z_{i,j} \in \mathbb{Z}_p$ are chosen uniformly at random. We say the multi-instance of DDH assumption holds if for all PPT distinguisher $\mathcal{D}$, its advantage $\operatorname{A d v}_{\mathcal{D}, \mathrm{G}}^{D D H}(\lambda)\text{ is equal to: }\\ \mid \operatorname{Pr}[\mathcal{D}(\mathsf{g}, \mathsf{g}^{r_i}, \mathsf{g}^{s_j}, \mathsf{g}^{{r_i} {s_j}})_{i,j}=$ $1]-\operatorname{Pr}[\mathcal{D}(\mathsf{g}, \mathsf{g}^{r_i} \mathsf{g}^{s_j}, \mathsf{g}^{z_{i,j}})_{i,j}=1] \mid \leq \text{negl}(\lambda)$, where $\text{negl}(\lambda)$ is negligible in $\lambda$.\par
Remark: It is well known (by a standard hybrid reduction) that the hardness of multi-instance DDH for m,n=poly($\lambda$) is equivalent to the standard one-instance DDH problem with m=n=1 \cite{boneh2020graduate}.

\end{definition}

\remove{
\begin{definition}
(DDH problem). Let $\mathbb{G}$ be a cyclic group of prime order $p$, the decisional Diffie-Hellman (DDH) problem is to distinguish the ensemble $\{(\mathsf{g}, \mathsf{g}^a, \mathsf{g}^b, \mathsf{g}^{a b})\}$ from $\{(\mathsf{g}, \mathsf{g}^a, \mathsf{g}^b, \mathsf{g}^z)\}$, where $\mathsf{g} \in \mathbb{G}$ and $a, b, z \in \mathbb{Z}_p$ are chosen uniformly at random. We say the DDH assumption holds if for all PPT distinguisher $\mathcal{D}$, its advantage $\operatorname{A d v}_{\mathcal{D}, \mathrm{G}}^{D D H}(\lambda)\text{ is equal to: }\\ \mid \operatorname{Pr}[\mathcal{D}(\mathsf{g}, \mathsf{g}^a, \mathsf{g}^b, \mathsf{g}^{a b})=$ $1]-\operatorname{Pr}[\mathcal{D}(\mathsf{g}, \mathsf{g}^a, \mathsf{g}^b, \mathsf{g}^z)=1] \mid \leq \text{negl}(\lambda)$, where $\text{negl}(\lambda)$ is negligible in $\lambda$
\end{definition}
}


\begin{definition}
(D-ACE problem). Let $\mathbb{G}$ be a cyclic group of prime order $p$, and $\pi$ be a permutation with a KeyGen algorithm that generates a set of key $\mathsf{(PK, SK)}$ for the $\pi$ evaluation, $\lambda$ be the security parameter, $\mathcal{A}$ be the adversary, and consider the game in Algorithm \ref{alg: assumption2} that is played between an adversary $\mathcal{A}$ and a challenger and is parameterized by a bit $v\in \{0, 1\}$.
The adversary's distinguishing advantage is $|\operatorname{Pr}[v=v']-(1/2)|$ and we say that D-ACE assumption holds if for all PPT adversary, its distinguishing advantage $|\operatorname{Pr}[v=v']-(1/2)|\mid \leq \text{negl}(\lambda)$, where negl$(\lambda)$ is negligible in $\lambda$.

\end{definition}

\begin{algorithm}[hbt!]

\caption{\textbf{D-ACE}}
        \label{alg: assumption2}

\begin{algorithmic}[1]
\State $(\mathsf{SK,PK})\gets$KeyGen$(1^{\lambda})$
\State $\mathcal{A}$ picks two scenarios of $S_{0}$(computing the elements of a set), $S_{1}$(random elements in a set) and chooses m and n.
\State $v \stackrel{\$}{\leftarrow} \{0,1\}$

\For {$i=1$ to m} 
\State Randomly pick $a_i \stackrel{\$}{\leftarrow} \{0,1\}^\lambda$
\State Select $ {b'_{0}} \stackrel{\$}{\leftarrow} \mathcal{M}$
\For {$j=1$ to n}
\State Randomly pick $c_j \stackrel{\$}{\leftarrow} \{0,1\}^\lambda$

\State $b'_{j} \leftarrow \pi_{\mathsf{S K}}^{-1}\left(b'_{j-1}\right)$
\State $b_j \gets b'_j$ mod $p$
\State $\mathsf{I}_{ij} \leftarrow \mathsf{g}^{{b_j}\cdot {a_i}}$
\If{$v=0$}
\State Compute $\mathsf{F}_{ij} \gets \mathsf{g}^{{b_j}\cdot {a_i}/{c_j}}$ 
\State else
\State Randomly select $\mathsf{F}_{ij} \stackrel{\$}{\leftarrow} \mathbb{G}$

\EndIf


\EndFor
\EndFor

\State $v'\gets\mathcal{A}(\mathsf{g}, \mathsf{g}^{a_i}, b_j,
\mathsf{F}_{ij})_{i,j}$


\end{algorithmic}
\end{algorithm}

\begin{algorithm}[hbt!]

\caption{\textbf{Reduction from DDH to D-ACE}}
        \label{alg: lemma}

\begin{algorithmic}[1]
\State $(\mathsf{SK,PK})\gets$KeyGen$(1^{\lambda})$

\State Given all 
$\{(\mathsf{g}, \mathsf{g}^{r_i}, \mathsf{g}^{s_j}, U_{i,j})\}_{i,j}$, with $U_{i,j}$ being either real ($\mathsf{g}^{{r_i}{s_j}}$) or random ($\mathsf{g}^{z_{i,j}}$) for independent uniform $z_{i,j}$ in $\mathbb{Z}_p$
, the DDH will run $\mathcal{A}$

\For {$i=1$ to m}
\State Select $ {b'_{0}} \stackrel{\$}{\leftarrow} \mathcal{M}$
\For {$j=1$ to n}

\State $b'_{j} \leftarrow \pi_{\mathsf{S K}}^{-1}\left(b'_{j-1}\right)$

\State $b_j \gets b'_j$ mod $p$
\remove{
\If{$U_{ij}=\mathsf{g}^{a_i/c_j}$} //Real case in DDH
\State DDH alg. computes $U_{ij}^{b_j}$ 
\State Runs $\mathcal{A}$ on $\mathcal{A}(\mathsf{g}, \mathsf{g}^{a_i}, b_j,\mathsf{g}^{a_ib_j},U_{ij}^{b_j}=\mathsf{g}^{a_i\cdot b_j/c_j})$
//Real case in D-ACE
\State $v=0$

\EndIf

\If{$U_{ij}=g^{z_{ij}}$} //Random case in DDH
\State DDH alg. computes $U_{ij}^{b_j}$ 
\State Runs $\mathcal{A}$ on $\mathcal{A}(\mathsf{g}, \mathsf{g}^{a_i}, b_j,\mathsf{g}^{a_ib_j},U_{ij}^{b_j}=\mathsf{g}^{z_{ij}\cdot b_j})$ //Random case in D-ACE
\State $v=1$

\EndIf
}

\EndFor
\EndFor

\State $v'\gets\mathcal{A}(\mathsf{g}, \mathsf{g}^{r_i}, b_j,U_{ij}^{b_j})_{i,j}$

\State $\mathcal{D}$ outputs $v'$

\end{algorithmic}
\end{algorithm}

\begin{lemma}
If there exists an efficient algorithm $\mathcal{A}$ with a non-negligible advantage against D-ACE, then we can construct an efficient algorithm $\mathcal{D}$ with a non-negligible advantage against DDH.
\end{lemma}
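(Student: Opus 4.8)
The plan is to establish the lemma directly via the explicit reduction of Algorithm~\ref{alg: lemma}: given any PPT distinguisher $\mathcal{A}$ that wins the D-ACE game of Algorithm~\ref{alg: assumption2} with non-negligible advantage, I would build a PPT distinguisher $\mathcal{D}$ for the multi-instance DDH problem and show that $\mathcal{D}$ inherits a non-negligible advantage. The heart of the argument is a syntactic embedding of a multi-instance DDH challenge $\{(\mathsf{g},\mathsf{g}^{r_i},\mathsf{g}^{s_j},U_{ij})\}_{i,j}$ into a D-ACE instance under the implicit reparametrisation $a_i = r_i$ and $c_j = 1/s_j$, together with the observation that the $b_j$ values can be produced by $\mathcal{D}$ itself, since they depend only on the trapdoor-permutation chain and not on the DDH secrets.

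Concretely, $\mathcal{D}$ first runs $\mathrm{KeyGen}(1^\lambda)$ to obtain $(\mathsf{SK},\mathsf{PK})$, then for each $i$ it samples a fresh seed $b'_0 \stackrel{\$}{\leftarrow}\mathcal{M}$ and unrolls the chain $b'_j \leftarrow \pi_{\mathsf{SK}}^{-1}(b'_{j-1})$, $b_j \leftarrow b'_j \bmod p$, exactly as the real D-ACE challenger does; since $\mathcal{D}$ holds $\mathsf{SK}$ these values are distributed identically to the game. It then feeds $\mathcal{A}$ the tuple $(\mathsf{g},\mathsf{g}^{r_i},b_j,U_{ij}^{\,b_j})_{i,j}$ and outputs whatever bit $\mathcal{A}$ returns. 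The key exponent bookkeeping is then routine: when $U_{ij}=\mathsf{g}^{r_i s_j}$ we obtain $\mathsf{F}_{ij}=U_{ij}^{\,b_j}=\mathsf{g}^{r_i s_j b_j}=\mathsf{g}^{a_i b_j / c_j}$, which is exactly the $v=0$ branch of Algorithm~\ref{alg: assumption2}; when $U_{ij}=\mathsf{g}^{z_{ij}}$ we obtain $\mathsf{F}_{ij}=\mathsf{g}^{z_{ij} b_j}$, which (provided $b_j\neq 0$) is uniform and independent in $\mathbb{G}$, matching the $v=1$ branch. Note that $\mathcal{D}$ never needs $\mathsf{g}^{s_j}$, because the D-ACE view never exposes $c_j$; the factor $1/c_j=s_j$ only ever appears inside the exponent of $\mathsf{F}_{ij}$.

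The step I expect to require the most care is the distribution-matching argument that makes the two branches of $\mathcal{D}$'s simulation statistically indistinguishable from the genuine D-ACE game, so that the advantage transfers cleanly. Here I would check three things: that replacing the $\lambda$-bit exponents $a_i,c_j$ of the real game by the uniform $\mathbb{Z}_p$ quantities $r_i,1/s_j$ changes the distribution of $\mathsf{g}^{a_i}$ and of the implicit $c_j$ only by negligible statistical distance; that the reparametrisation $c_j=1/s_j$ is well defined except on the negligible-probability event $s_j=0$, and similarly that $b_j\neq 0$ so that the random branch is genuinely uniform; and that the joint distribution of the $\mathsf{F}_{ij}$, which share randomness across indices through the common $a_i$ and the chained $b_j$, is reproduced faithfully by the embedding. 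Granting this, each conditioning of $\mathcal{D}$'s output on the real versus random DDH case coincides, up to negligible terms, with $\mathcal{A}$'s output conditioned on $v=0$ versus $v=1$, so $\mathcal{D}$'s DDH advantage equals a constant multiple of $\mathcal{A}$'s D-ACE advantage and is therefore non-negligible. Finally, invoking the Remark that multi-instance DDH with $m,n=\mathrm{poly}(\lambda)$ is equivalent to single-instance DDH, this yields a distinguisher against standard DDH and establishes the lemma.
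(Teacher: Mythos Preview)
Your proposal is correct and follows essentially the same route as the paper: the identical embedding $a_i=r_i$, $c_j=1/s_j$, with the $b_j$ chain simulated by $\mathcal{D}$, the same two-case analysis of $U_{ij}^{b_j}$, and the same bad event $b_j=0$ bounded by $n/p$. If anything you are slightly more careful than the paper in flagging the auxiliary distribution-matching checks (the $\{0,1\}^\lambda$ versus $\mathbb{Z}_p$ sampling of $a_i,c_j$, the $s_j=0$ event) and in explicitly invoking the multi-instance to single-instance DDH equivalence at the end.
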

\begin{proof}
The reduction algorithm (Algorithm \ref{alg: lemma}) uses the $r_i$, $s_j$ of the DDH input  instance as the $a_i$, $1/c_j$ of the D-ACE instance,  respectively, whereas the $b_j$s of the D-ACE instance are simulated by $\mathcal{D}$ itself exactly as in the D-ACE game. The reduction can be analyzed as follows by considering the two possible cases of inputs to $\mathcal{D}$. If the input to algorithm $\mathcal{D}$ comes from the real DDH distribution i.e., $U_{i,j} = \mathsf{g}^{{r_i}{s_j}}$, then the last input to $\mathcal{A}$ in line 10 is $U_{i,j}^{b_j} = \mathsf{g}^{{r_i}{s_j}{b_j}} = \mathsf{g}^{{a_i} {b_j} / {c_j}}$, exactly as in the D-ACE real game ($v=0$), while if the input to $\mathcal{D}$ comes from the random DDH distribution i.e., $U_{i,j} = \mathsf{g}^{z_{i,j}}$, then the last input to $\mathcal{A}$ in line 10 is $U_{i,j}^{b_j} = \mathsf{g}^{{z_{i,j}}{b_j}}$, which are uniform and independent group elements if $b_j \neq 0$ for all $j$. Therefore, adv($\mathcal{D}$) can differ from adv($\mathcal{A}$) by at most the probability of the event B that one of the $b_j$s $= 0$.



In Algorithm \ref{alg: lemma}, when $z_{ij}$ is uniform, we want $z_{ij}\cdot b_j$ to be uniform. If $b_j$ is invertible in mod $p$, uniform $z_{ij}$ mod $p$ gives uniform $z_{ij}\cdot b_j$ mod $p$.
Based on line 4 in Algorithm \ref{alg: lemma}, $b'_j$ is uniform in $\mathcal{M}$ since $b'_0$ is uniform and $b'_j$ gets mapped through an iterated permutation (line 6). Therefore, we have:
\begin{center}
    $|\operatorname{Pr}[b_j\text{ mod }p=0]|=$\\$|\operatorname{Pr}[\text{a uniform element in }\mathcal{M}\text{ mod }p=0]|\leq (1/p)$

\end{center}

Now, for all $j$, we have: 
  $|\operatorname{Pr}[\exists j|_{1}^{n} ~b_j\text{ mod }p=0]|\leq (n/p)$  

this is negligible in $\lambda$ ($p\geq 2^{2\lambda}$ is large). This means except with probability equals to $n/p$, which is negligible in $\lambda$, all of the $b_j$ are not $0$ and uniform $z_{ij}$ maps to uniform $(b_j\cdot z_{ij})$ and the reduction works as in the given Algorithm \ref{alg: lemma}.
  \end{proof}



\subsection{Leakages}
Let list $\mathsf{Q}$ be a set of all Update and Search queries, where each entry in list $\mathsf{Q}$ has the form of ($\mathsf{u,add,(\DOID_1,\DOID_2,\ldots)}$), or ($\mathsf{u,del,\DOID}$) or ($\mathsf{u,w}$) for Update (add), Update (delete) and Search queries,
respectively, where parameter $\mathsf{u}$ denotes the timestamp of issuing a query. We define a function $\mathsf{F}$ of the inputs $(\mathsf{\DOID,w})$ as a randomization function, that outputs a random element for each pair of ($\mathsf{\DOID,w}$). We also define a function $\mathsf{T}$ of the input $\mathsf{w}$ as a randomization function, that outputs a random element for each $\mathsf{w}$. The definitions of the leakages are as follows.


\begin{itemize}[noitemsep,topsep=0pt,leftmargin=10pt]

\item When adding several $\DOID$s and their relative keywords in a batch insertion, function $\mathsf{N_{\DOID}}$ returns the total number of $\DOID$s that have been added.

\begin{center}
{$
\mathsf{N_{\DOID}}(\mathsf{add})=\left\{\left(\text{Number of added $\DOID$s in one batch insertion}\right)
\right\}
$}
\end{center}

\item When adding several $\DOID$s and their relative keywords in a batch insertion, function $\mathsf{NW_{\DOID}}$ returns the total number of $\mathsf{w}$s that have been added for particular $\DOID$.

\begin{center}
{$\mathsf{NW_{\DOID}}(\mathsf{add})=\{(\text{Number of added $\mathsf{w}$s for one particular $\DOID$}$ in a batch insertion})$\}$
\end{center}

\item Given an identifier $\DOID$, function $\operatorname{AddHist}(\DOID)$
returns the history timestamp of Add operation about
$\DOID$ that has been added in a batch insertion with some other $\DOID$s.
\begin{center}
{$
\operatorname{AddHist}(\DOID)=\left\{\left(\mathsf{u^{a d d}}\right) \mid \exists \text{set of }\DOID\text{s},\left(\mathsf{u^{a d d}, a d d},(\text{set of }\DOID\text{s including }\DOID)\right) \in \mathsf{Q}\right.\}$}
\end{center}


\item Given an identifier $\DOID$, function $\operatorname{DelHist}(\DOID)$
returns the history timestamps of all paired Add and Delete operations about
$\DOID$.
\begin{center}
{$
\operatorname{DelHist}(\DOID)=\{\left(\mathsf{u^{a d d}, u^{ {del }}}\right) \mid \exists \DOID,\left(\mathsf{u^{a d d}, a d d},(\text{set of }\DOID\text{s including }\DOID)\right) \in \mathsf{Q}$ $\text { and } \left(\mathsf{u^{\text {del }}, { del }},\DOID\right) \in \mathsf{Q}\}$}
\end{center}

\item Given an identifier $\DOID$, function $\mathsf{Delindex}(\DOID)$
returns the correlation of stored deltas in $\mathsf{FSet}$ with the search indices in $\mathsf{ISet}$, that is revealed after deletion of $\DOID$.

\begin{center}
{$
\mathsf{Delindex}(\DOID)=\left\{\Delta 2 \ell\text{: matching delta with search index after deleting $\DOID$}
\right\}$}
\end{center}

\item Given an identifier $\DOID$, function $\operatorname{Delw}(\DOID)$
returns a set of all $\mathsf{T}_i\mathsf{(w)}$ for all $\mathsf{w}_i$ that have been deleted in $\mathsf{u^{del}}$ and have been searched in time $\mathsf{u_i<u^{del}}$. Otherwise, returns nothing. Note: this information can be derived from query-info and from the defined set of $S$.

\begin{center}
{$
\operatorname{Delw}(\DOID)=\{\mathsf{\{T}_i({\mathsf{w_{\DOID}}})\}_i \mid  
\left(\mathsf{u^{ {del }},  { del }},\DOID\right) \in \mathsf{Q} 
\text{ and }\mathsf{w}_i\text{ has been searched before }\mathsf{u^{ {del} }}
\}$}
\end{center}


\item Given a keyword $\mathsf{w}$, function $\operatorname{sp}(\mathsf{w})$ returns all timestamps of the Search queries about
keyword $\mathsf{w}$ and $\operatorname{rp}(\mathsf{w})$ returns the timestamps and the randomized output related to the IDs returned in the search of $\mathsf{w}$.
\begin{center}
{$
\operatorname{sp}(\mathsf{w})=\{\mathsf{u \mid(u,w) \in Q}\}$}
\end{center}

\begin{center}
{$
\operatorname{rp}(\mathsf{w})=\{(\mathsf{u,F(\DOID,w)) \mid(u,w) \in Q}\}$}
\end{center}

\item Given a keyword $\mathsf{w}$, function $\operatorname{TimeDB}(\mathsf{w})$ returns all F outputs related to the undeleted identifiers ($\DOID$s) that have keyword
$\mathsf{w}$ and the history timestamps for adding these $\DOID$s.
\begin{center}
{$
\operatorname{TimeDB}(\mathsf{w})=\{(\mathsf{u, F(\DOID,w)) \mid(u, a d d,(\DOID)) \in Q}$ $\text { and } \forall \mathsf{u^{\prime},\left(u^{\prime},  { del },(\DOID)\right) \notin Q}\} $}
\end{center}

\item Given a keyword $\mathsf{w}$, skipped tokens returns all the search tokens for $\mathsf{w}$ that were deleted before the time of search for $\mathsf{w}$.

\begin{center}
{$
\operatorname{skipped~tokens}(\mathsf{w})=\{(\mathsf{u^{ {srch} }},\mathsf{F({ID,w}})) \mid  
\left(\mathsf{u^{ {del }},  { del }},\DOID\right) \in \mathsf{Q} 
\text{ and }\DOID \text{ that has }\mathsf{w}\text{ has been deleted before }\mathsf{u^{ {srch} }}
\}$}
\end{center}

\item Given an identifier $\DOID$, function $\operatorname{BFF}(\DOID)$
returns the set of entries (i.e., indices, deltas, encrypted IDs) that have been added in one batch insertion and have not been deleted yet.

\begin{center}
{$
\operatorname{BFF}(\DOID)=\{(\text{set of entries related to }\DOID s, \mathsf{w}\text{ in the database}) \mid$ $\operatorname{AddHist}(\DOID s)=\operatorname{AddHist}(\DOID)
\}$}
\end{center}

\end{itemize}

 \newtheorem{simulator}{Simulator}

In this article, ID-based DSSE (IDDSSE) is considered as a dynamic SSE that offers updates based on the IDs. It means IDs with relevant keywords are either added or deleted in the update phase. We define the below definitions of IDFP and IDBP.\par

\begin{definition}
An IDDSSE scheme is ID-forward-private if Update (add) queries do not leak which keywords are involved in the IDs that are being updated. Just the number of IDs and the total number of keywords in a batch update being added to the server are revealed.

More formally,
IDFP: A $\mathcal{L}$-non-adaptively-secure IDSSE scheme is ID-forward-private iff the update leakage function $\mathcal{L}^{\text {Updt-add}}$ can be written as:\newline
$\mathcal{L}^{Updt-\mathsf{add}}(\mathsf{add},\{\DOID_i,\mathbf{W}_{{\DOID_i}}\}_i)=$\newline$\mathcal{L}^{\prime}(\{\mathsf{add},\mathsf{NW_{\DOID_i}}(\mathsf{add}),\operatorname{AddHist}(\DOID_i)\})$ where $\mathcal{L}^{\prime}$ is stateless.


\end{definition}

\begin{definition}
An IDDSSE scheme is ID-backward-private if it does not reveal the IDs that have already been deleted but it leaks if the search on being deleted w happened before deletion, the number of IDs currently matching w, when they were inserted, and which deletion update is related to which batch insertion update.

More formally, IDBP: A $\mathcal{L}$-non-adaptively-secure IDSSE scheme is ID-backward-private iff the search and update leakage functions $\mathcal{L}^{\text {Srch }}, \mathcal{L}^{\text {Updt-del}}$ can be written as:\\
$\mathcal{L}^{Updt-\mathsf{del}}(\mathsf{del},\DOID)=\mathcal{L}^{\prime}(\{\mathsf{del},\operatorname{Delw}(\DOID),\operatorname{DelHist}(\DOID)\})$\newline
$\mathcal{L}^{Srch}(\mathsf{w})=\mathcal{L}^{\prime \prime}(\{\operatorname{sp}(\mathsf{w}),\operatorname{rp}(\mathsf{w}),\operatorname{TimeDB}(\mathsf{w})\})$

where $\mathcal{L}^{\prime}$ and $\mathcal{L}^{\prime \prime}$ are stateless.
\end{definition}

\begin{theorem}\label{theorem}
Let $\pi$ be a one-way trapdoor permutation, F a secure PRF, and (Enc,Dec) a secure symmetric encryption scheme. Assuming that the D-ACE assumption holds in $\mathbb{G}$, by defining the leakage function $\mathcal{L}$ as below, ACE is $\mathcal{L}$-non-adaptively-secure and satisfies IDFP, IDBP.\newline
$\mathcal{L}^{S t p}(\lambda)=\{\lambda\}$\newline
$\mathcal{L}^{Updt}(\mathsf{add},\{\DOID_1,\mathbf{W}_{{\DOID_1}}\},\{\DOID_2,\mathbf{W}_{{\DOID_2}}\},\ldots)=$\newline
$\{\mathsf{add},\mathsf{N_{\DOID}}(\mathsf{add}),\mathsf{NW_{\DOID}}(\mathsf{add}),\operatorname{AddHist}(\DOID)\}$\newline   
$\mathcal{L}^{Updt}(\mathsf{del},\DOID)=\{\mathsf{del},\operatorname{Delw}(\DOID),\operatorname{DelHist}(\DOID),\operatorname{BFF}(\DOID)\}$\newline
$\mathcal{L}^{Srch}(\mathsf{w})=\{\operatorname{sp}(\mathsf{w}),\operatorname{rp}(\mathsf{w}),\operatorname{TimeDB}(\mathsf{w}),\text{skipped tokens}\}$


\end{theorem}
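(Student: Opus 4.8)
The plan is to exhibit a PPT simulator $\mathcal{S}=(\mathcal{S}.\mathsf{Setup},\mathcal{S}.\mathsf{Update},\mathcal{S}.\mathsf{Search})$ that, fed only with the leakage $\mathcal{L}$ and the adversary's query-info, reproduces the view of $\Server$, and to argue $\operatorname{Real}_{\mathcal{A}}^{\Pi}(\lambda)\approx\operatorname{Ideal}_{\mathcal{A},\mathcal{S}}^{\Pi}(\lambda)$ through a sequence of game hops, modelling $H$ as a random oracle that $\mathcal{S}$ programs. First I would replace every PRF invocation by a lazily-sampled random function: the tags $\mathsf{tag}_{\mathsf{w}}=F_p(\K_T,\mathsf{w})$, $\mathsf{tag}_{\DOID}=F_p(\K_2,\DOID)$, the index $r_{\DOID}=F(\K_1,\DOID)$ and the per-keyword key $\K_{\mathsf{w}}=F(\K_S,\mathsf{w})$ become fresh random values, each hop bounded by $\operatorname{Adv}^{\mathrm{prf}}$. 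Because the randomised leakage functions $\mathsf{F}(\DOID,\mathsf{w})$ and $\mathsf{T}(\mathsf{w})$ already supply consistent random labels, $\mathcal{S}$ can use these in place of real identifiers and keywords throughout.

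In the next hops I would peel off the two places where plaintext information could leak. Using IND-CPA security of $(\mathrm{Enc},\mathrm{Dec})$, every stored ciphertext $\DOID'=\mathrm{Enc}(\K_{\mathsf{w}},\DOID)$ is replaced by an encryption of a fixed dummy; the only ciphertexts that must decrypt consistently are those returned by a search on $\mathsf{w}$, and there $\mathcal{S}$ reuses the label $\mathsf{F}(\DOID,\mathsf{w})$ delivered by $\operatorname{rp}(\mathsf{w})$ and $\operatorname{TimeDB}(\mathsf{w})$, so no real key or identifier is ever needed. The central hop invokes the D-ACE assumption: the deltas $\Delta=\mathsf{g}^{\mathsf{ST}'_{c}\cdot\mathsf{tag}_{\mathsf{w}}/\mathsf{tag}_{\DOID}}$ stored in $\mathsf{FSet}$, together with the group elements $\mathsf{g}^{\mathsf{ST}'_{c}\cdot\mathsf{tag}_{\mathsf{w}}}$ hashed into the $\mathsf{ISet}$ indices, are exactly an instance of the D-ACE challenge with $a_i\leftrightarrow\mathsf{tag}_{\mathsf{w}}$, $b_j\leftrightarrow\mathsf{ST}'_{c}$ (generated by the same $\pi^{-1}_{\mathsf{SK}}$ chain) and $c_j\leftrightarrow\mathsf{tag}_{\DOID}$; by D-ACE (whose hardness reduces to DDH by the Lemma) they are indistinguishable from independent uniform group elements. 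After this hop $\mathcal{S}$ stores uniformly random deltas and random $\mathsf{ISet}$ indices, which is precisely the view justified by IDFP, where an $\mathsf{add}$ reveals only $\mathsf{N}_{\DOID}$, $\mathsf{NW}_{\DOID}$ and $\operatorname{AddHist}(\DOID)$.

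Consistency must then be restored exactly at the points the leakage permits, and this is where the random-oracle programming carries the argument. On a $\mathsf{Search}(\mathsf{w})$ the simulator is given $\operatorname{sp}(\mathsf{w})$, the counter $c$ and the token $\mathsf{tk}$; it walks the chain $\mathsf{ST}_{i-1}\leftarrow\pi_{\mathrm{PK}}(\mathsf{ST}_i)$ and programs $H(k_h,\mathsf{tk}^{\mathsf{ST}_i\bmod p})$ to hit the random $\mathsf{ISet}$ indices carrying the labels from $\operatorname{TimeDB}(\mathsf{w})$ and $\operatorname{rp}(\mathsf{w})$, skipping those flagged by $\text{skipped tokens}$. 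Forward privacy is automatic: $\Server$ holds only $\mathsf{PK}$, so counters produced after a search via $\pi^{-1}_{\mathsf{SK}}$ are unreachable and unlinkable, matching IDFP. On an $\mathsf{Update}(\mathsf{del},\DOID)$ the simulator publishes the token $(\mathsf{tag}_{\DOID},r_{\DOID})$ and, using $\operatorname{BFF}(\DOID)$ and $\operatorname{DelHist}(\DOID)$ to identify which batch entries belong to $\DOID$, programs $H(k_h,\Delta^{\mathsf{tag}_{\DOID}})$ on each stored random $\Delta$ to coincide with the corresponding $\mathsf{ISet}$ index; $\operatorname{Delw}(\DOID)$ tells it which keywords were already searched, so that link was already exposed and must stay consistent, while real deletion guarantees that deleted identifiers never resurface in a later search, yielding IDBP.

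The main obstacle I expect is the joint consistency of the random-oracle programming: the delta-to-index correlation revealed only at deletion time ($\operatorname{Delindex}/\operatorname{BFF}$) must agree with index values that may already have been fixed by an earlier search on the same keyword ($\operatorname{Delw}$), and $\mathcal{S}$ must never be forced to program $H$ twice with conflicting outputs nor on a point the adversary has already queried. Controlling these collision and abort events, and showing the only residual gap is the negligible probability that some $b_j\equiv 0\bmod p$ already isolated in the Lemma together with the negligible chance that $\mathcal{A}$ guesses a programmed input before it is set, is the crux; the non-adaptive setting, in which query-info fixes in advance which identifiers are deleted and which keywords are searched before deletion, is exactly what lets $\mathcal{S}$ lay out these programmings coherently from the start and bound all such events by a union bound.
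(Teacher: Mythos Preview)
Your proposal is correct and follows essentially the same hybrid sequence as the paper: replace the PRFs by lazily-sampled random functions, replace the $\DOID'$ ciphertexts by encryptions of a fixed dummy via IND-CPA, invoke D-ACE to randomise the $\Delta$'s in $\mathsf{FSet}$, and program the random oracle $H$ at search and delete time using the leakage together with query-info. The paper makes one point slightly more explicit than you do: the ``bad'' event that $\mathcal{A}$ queries $H$ on a point $\mathsf{g}^{\mathsf{ST}'_{c}\cdot\mathsf{tag}_{\mathsf{w}}}$ before it is programmed is bounded not by a generic collision argument but by the one-wayness of the trapdoor permutation $\pi$ (yielding a separate $N\cdot\operatorname{Adv}^{\mathrm{OW}}_{\pi}$ term in the final advantage), which is the formal content of your informal remark that future counters are ``unreachable'' from $\mathsf{PK}$ alone.
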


\begin{proof}
    The proof is discussed in the Appendix \ref{appendix:proof}.
\end{proof}

Discussion: It is important to note that in our system model, $\Trustee$ and $\Vetter$ are two different entities performing their own mentioned responsibilities discussed in section \ref{section:system model}. So, the $\User$ interacts with the $\Vetter$, which does not have write permission (like the $\Trustee$ has), and in the worst case, the user might get more information but does not interact with an entity to write something or tamper with the database. Moreover, following the principle of separation of privileges, all the privileges are not granted to one entity and $\Trustee$ and $\Vetter$ are separated. Therefore, if one is compromised, the other one will not be affected. 
Additionally, it is worth mentioning that information leakages in secure searchable encryption (SSE) schemes can be mitigated through the use of oblivious RAM (ORAM) techniques \cite{roche2016practical,garg2016tworam}. However, ORAM introduces significant computational overhead and bandwidth costs for each keyword search, rendering it impractical for achieving efficient SSE. As a result, a practical SSE scheme often needs to strike a balance between information leakage and efficiency, accepting a certain degree of leakage to achieve acceptable performance.

\section{Analytical Performance Comparison}\label{sec:analytical}
This section presents the analytical performance comparison of our ACE with existing related works from different perspectives. The overall comparison is depicted in Table \ref{table:comm/comp complexity}.

\begin{table*}[ht!]
\caption{Computational and communication costs}
\label{table:comm/comp complexity}
\centering
\begin{tabular}{cc>{\centering\arraybackslash}m{3.5cm}>{\centering\arraybackslash}m{3.5cm}>{\centering\arraybackslash}m{3.5cm}}
\toprule
\multicolumn{2}{c}{\textbf{Reference} }      & \begin{tabular}[c]{@{}c@{}}ACE  \end{tabular}                           & \begin{tabular}[c]{@{}c@{}}  \cite{chen2021bestie} \end{tabular} & \begin{tabular}[c]{@{}c@{}} \cite{xu2017dynamic}\end{tabular} \\  
\cmidrule(lr){1-2}\cmidrule(lr){3-3} \cmidrule(lr){4-4} \cmidrule(lr){5-5}
\cmidrule{1-2}\cmidrule(lr){3-3} \cmidrule(lr){4-4} \cmidrule(lr){5-5}
\multicolumn{1}{c|}{}                                 & Addition                 & $x(2T_F+2T_e+T_S+T_h) + 2T_F+T_E$                                            & $x(2T_F+T_h+T_X)+T_E$                                                                                                  & $x(3T_F+3T_h+3T_X)+T_F$                                                                   \\ \cline{2-5}
\multicolumn{1}{c|}{}                                 & Deletion      & $2T_F+x(T_e+T_h)$                                                  & $x(2T_F+T_h+T_X)+T_E$                                                                                           & $2T_F+x(2T_h+3T_X+T_R)$                                                                 \\ \cline{2-5} 
\multicolumn{1}{c|}{\multirow{-4}{*}{\textbf{Comp.}}} & Search       & $T_F +\alpha(T_e+T_h+T_S)$                                            & \begin{tabular}[c]{@{}c@{}}$2T_F+T_h+$\\$N_U*(T_h+T_D)$\end{tabular}                                                                                                      & \begin{tabular}[c]{@{}c@{}}$2T_F+\alpha(T_h+T_X)+$\\ $N_D* (T_h+3T_X)$\end{tabular}                              \\ \hline
\multicolumn{1}{c|}{\textbf{Stor.}}                   & Storage Size          & $r(\ell_F+x(\ell_E+\ell_D+\ell_h))$                                                        & \begin{tabular}[c]{@{}c@{}}$rx(2\ell_h+\ell_E)+$\\$N_U*(2\ell_h+\ell_E)$\end{tabular}                                                                                                  & $3rx(\ell_h+\ell_F)$                                                                                 \\ \hline
\multicolumn{1}{c|}{\textbf{}}                   & Addition             & \begin{tabular}[c]{@{}c@{}}$\ell_F+x(\ell_E+\ell_D+\ell_h)$\end{tabular} & $x(2\ell_h+\ell_E)$                                                             & $3x(\ell_F+\ell_h)$                                                         \\ 

\cline{2-5} 
\multicolumn{1}{c|}{}                                 & Deletion       & $2\ell_F$                                                  & $x(2\ell_h+\ell_E)$                                                                                          & $2\ell_F$                                                                 \\ \cline{2-5} 
\multicolumn{1}{c|}{\multirow{-4}{*}{\textbf{Comm.}}} & Search      & $\ell_F+\ell_D$                                            & $\ell_F+\ell_h$                                                                                                     & $2\ell_F$                              \\ 

\bottomrule

\multicolumn{5}{p{15cm}}{\footnotesize{
\textbf{Notations}: $T_e$: Time needed to compute an exponentiation; $T_F$: Time needed to compute a PRF; $T_h$: Time needed to compute a hash; $T_E$: Time needed to encrypt a block with a symmetric cryptosystem; $T_S$: Time needed to compute trapdoor permutation; $T_X$: Time needed to compute XOR; $T_R$: Time needed to overwrite an entry; $N_U$: Number of updates; $N_D$: Number of deletions; $\alpha$: Number of records satisfying searched keyword; $x$: Number of keywords of an ID; $r$: Number of records in DB; $\ell_D$: Size of an element from Diffie-Hellman (DH) group; $\ell_F$: Size of the output of a PRF; $\ell_E$: Size of the block of SE; $\ell_h$: Size of the output of hash function H.}}
\end{tabular}
\end{table*}

\begin{itemize}[noitemsep,topsep=0pt,leftmargin=10pt]
    \item Update-Addition: When adding one ID (with its all relevant keywords) to the database, the computation that is needed and the communication complexity are in the order of the number of keywords the ID has for ACE, \cite{chen2021bestie} and \cite{xu2017dynamic}. If we add n IDs with their keywords, the computation and communication complexity also increases by the number of IDs in ACE, \cite{chen2021bestie} and \cite{xu2017dynamic}.

    \item Update-Deletion: To delete an ID with the relevant keywords, the computation is in the order of the number of keywords for ACE, \cite{chen2021bestie} and \cite{xu2017dynamic}. However, the communication complexity is in the order of the number of keywords for \cite{chen2021bestie} and is a small token for ACE and \cite{xu2017dynamic}.

    \item Search: Search computation complexity is in the order of the number of matched IDs in ACE, and it depends on the number of updates that has happened before search on w in \cite{chen2021bestie}. In \cite{xu2017dynamic}, the search complexity is in the order of the number of matched IDs for the keyword that is searched, and if a deletion happened before search, it needs to perform some computations to remove data in the search phase. However, ACE completes the update (addition and deletion) in their own phase and do not postpone any parts of update to the search phase.

    \item Storage: Storage size is in the order of the number of records multiplied by the number of keywords. It means it is in the order of the number of all pairs of (ID,w) in the dataset for all three schemes in Table \ref{table:comm/comp complexity}.

\end{itemize}

This analytical comparison highlights the efficiency of ACE in terms of its search and update mechanisms. While both ACE and the other scheme in \cite{xu2017dynamic} offer deletion based on ID, ACE stands out by providing instant real deletion without any negative implications. Additionally, ACE ensures low communication costs for both search and deletion operations.

\section{Experimental evaluations}\label{sec:experiments}
\subsection{Implementation}
We implemented ACE and evaluated it using different datasets. The programming environment, configuration, used cryptographic primitives, and the dataset information are as follows.\par
The hardware and software configuration used for the evaluation are as follows:
Hardware Platform: CPU: Intel i7-11850H; 
Memory: 64GB; 
Operating System: Fedora 35 x64; 
Compiler: Java 16; 
Cryptographic Library: Bouncy Castle; 
Database: Redis.

Programming Environment: We used an in-memory key-value database Redis \cite{Redis} to store FSet and ISet to improve the query and update performance. Our code is published at Proton Drive\footnote{ACE implementation: online at \href{https://drive.proton.me/urls/KZJMSC639G\#HqHLc9xGCUp1}{https://drive.proton.me/urls/ACE}}.\par
Cryptographic primitive: For all cryptographic primitives, we’ve utilised the libraries provided the Bouncy Castle \cite{bouncy}. For Pseudorandom Function PRF, we chose an AES-128 based CMAC algorithm to provide encryption for this hash function, and for $PRF_p$, a SHA-512 based HMAC was applied. For the Trapdoor Permutation $\pi$, we applied RSA-2048 cryptosystem to realise the asymmetric encryption with the characteristics of a trapdoor permutation.\par

The dataset we used to test our protocol, ACE, is a genomic dataset that part of it is a real-life dataset, which comes from The Harvard Personal Genome Project (PGP) \cite{PGP}. This is the SNP information of the patients alongside their phenotype, gender and ethnicity. By using this real-life dataset, we created synthetic datasets to evaluate ACE on datasets with different numbers of records and keywords (total number of (ID,w) pair from $5*10^4$ to $4*10^6$) to analyze its performance.


\subsection{Evaluation results}
The update, search time and communication costs, and storage analysis are discussed in this section.

\begin{itemize}[noitemsep,topsep=0pt,leftmargin=10pt]
\item{Update-Addition:}
As presented in Figure \ref{fig:Add} (a), since addition in ACE happens as a batch insertion, we evaluated the time for adding two IDs to the database when the number of keywords increases. The number of keywords of the IDs that are being added affects the time cost of the addition.

The communication cost is the size of the encrypted data that is being added to the database. So, it increases by the number of pairs of (ID,w) that are being added to the database. Figure \ref{fig:Add} (b) presents the ciphertext size when 2 IDs with different keywords are added to the FSet and ISet ($\#$pairs (w,ID)=$2*$($\#$keywords) in this evaluation).

\begin{figure*}[!t]
\centering
\subfigure[Update-Addition Time]{
\includegraphics[scale=0.48]{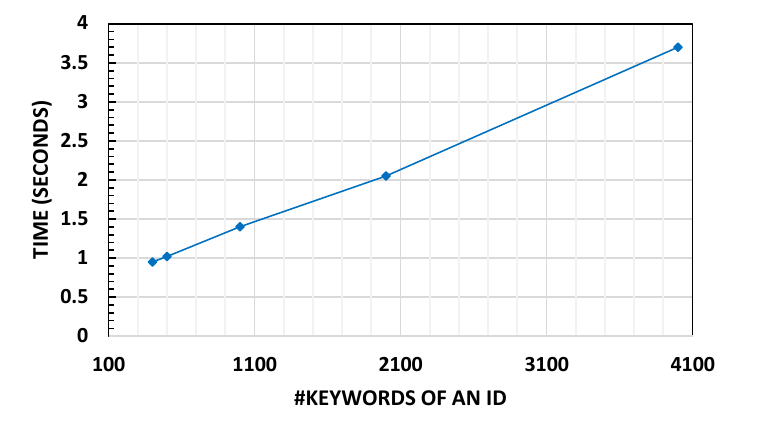}
}
\subfigure[Update-Addition ciphertext size (FSet and ISet)]{
\includegraphics[scale=0.48]{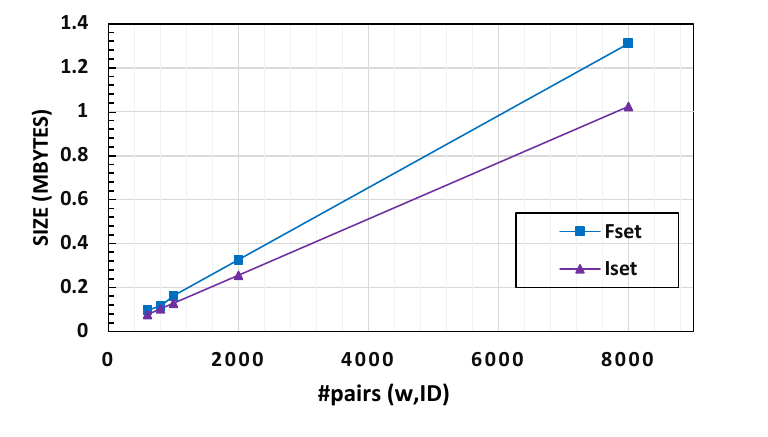}
}
\caption{Update-Addition of 2 $\DOID$s with different number of keywords}
\label{fig:Add}
\end{figure*}

\begin{figure*}[!t]
\centering
\subfigure[Update-Deletion Time (Server and Vetter)]{
\includegraphics[scale=0.5]{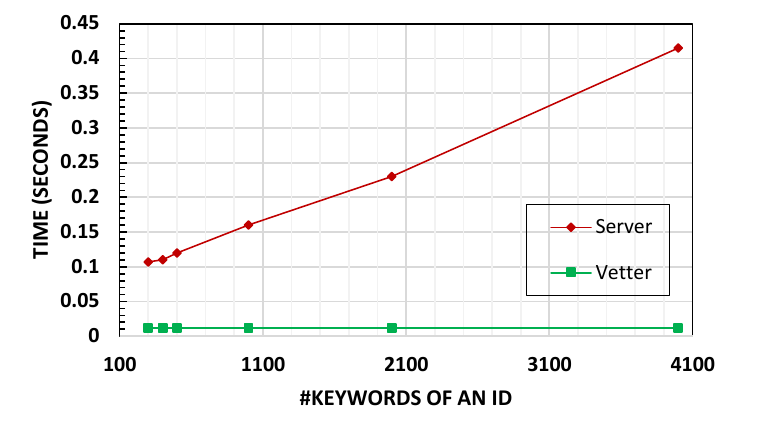}
}
\subfigure[Update-Deletion token size in ACE and Bestie]{
\includegraphics[scale=0.5]{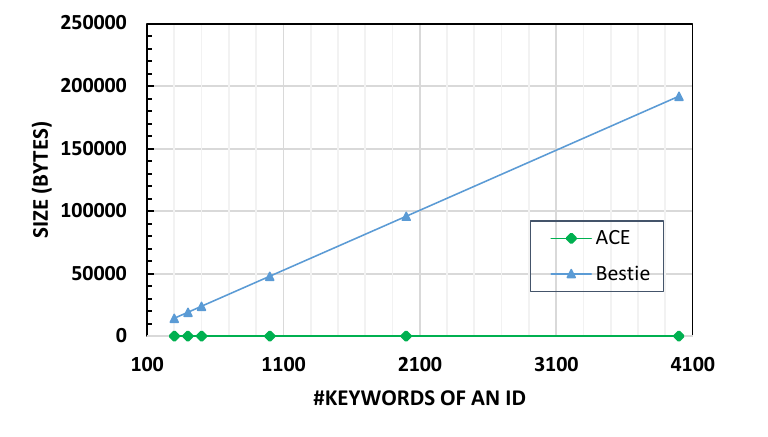}
}
\caption{Update-Deletion of 1 $\DOID$ with different number of keywords}
\label{fig:Del}
\end{figure*}

\item{Update-Deletion}: 
When a consent is revoked, or whenever the data of a data owner needs to be removed from the database, the Update-del algorithm removes the relevant data of a data owner. For this type of deletion, when the number of keywords of a data owner increases, the deletion time increases. However, since the vetter generates one token for deleting all the data of a data owner, the vetter's computation complexity is constant (see Figure \ref{fig:Del} (a)). The deletion of data of an ID happens in a non-interactive fashion (one token sent from the vetter to the server).

Since ACE provides deletion based on an ID, the deletion token is constant in size when the data of a data owner needs to be deleted. The number of keywords does not have any effect on the size of the token; hence, the required bandwidth does not increase for IDs with different number of keywords. However, in the schemes that support deletion of pair of (ID,w), the required bandwidth for deleting the data of a data owner increases by the number of keywords the data owner has. This is because for each keyword, a new token needs to be generated and sent to the server. This behaviour is shown in Figure \ref{fig:Del} (b) and the token size for Bestie protocol in \cite{chen2021bestie} is calculated from the sizes discussed in their paper.
The provided graphs' trends are consistent with analytical discussion in section \ref{sec:analytical}.

\item{Search:}
The search time in \cite{chen2021bestie} and \cite{xu2017dynamic} depends on the number of updates that were done before the search, as these two schemes do not complete the deletion when the deletion query is performed. They complete removing the data in the search phase. To provide instant deletion (data is deleted when it is requested), we do not postpone deletion or part of it to the search phase. The search time in ACE increases with the number of matched IDs for the keyword that is searched (see Figure \ref{fig:Search}). There is also an initialization time cost of around $200$ms due to Java processing that is included in the presented search results.
\begin{figure}[b]
\centering
\includegraphics[scale=0.48]{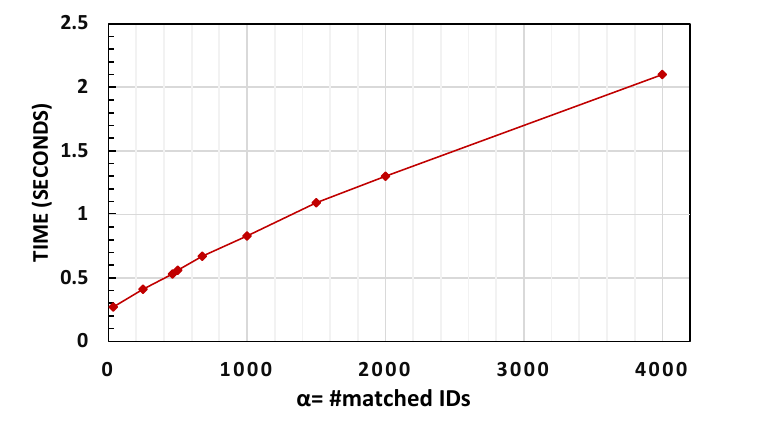}
\caption{Total search time with different number of matched $\DOID$s}
\label{fig:Search}
\end{figure}

\newtheorem{rmk}{Remark}
\begin{rmk}
    The main advantages of ACE are providing the features of instant deletion when the consent is revoked with low deletion communication complexity (one deletion token/non-interactive) and privacy of the ID (these are discussed in Table \ref{table:first comparison}). In terms of search time, we show that we achieve all these advantages with a reasonable performance (Figure \ref{fig:Search}). Therefore, we extract search time of other schemes and show that although ACE does not have the best search time, it still has a reasonable performance in comparison to earlier schemes that do not support the mentioned features of ACE. As it is shown in Table \ref{table:Search}, when the number of matched IDs ($\alpha$) is $200$, the search time of ACE is $0.38$s, and when $\alpha$ is $2,000$, the ACE search time is $1.3$s that is $10$ times and $700$ times speedup in comparison to Janus++ and Janus evaluated in \cite{sun2018practical}. It is also worth mentioning that in comparison with the schemes coded using C++ such as \cite{chen2021bestie}, we have slow down in results due to the compiler Java.
\end{rmk}

\begin{table}[t]
\caption{Search time in different schemes with different number of matched $\DOID$s}
\label{table:Search}
\centering
\begin{tabular}{>{\centering\arraybackslash}m{1.5cm}>{\centering\arraybackslash}m{2.5cm}>{\centering\arraybackslash}m{2.5cm}}
\toprule
\multirow{1}{*}{$\alpha$}$\ssymbol{5}$   & Scheme       & Search time$\ssymbol{4}$        \\\midrule

\multirow{3}{*}{$200$}   & \cite{xu2017dynamic}$\ssymbol{1}$       & $400$ ms           \\ \cmidrule(lr){2-3}
& \cite{chen2021bestie} & $<200$ ms\\ \cmidrule(lr){2-3}
& ACE & $380$ ms
 \\ \midrule
\multirow{3}{*}{$2,000$}& \cite{bost2017forward}$\ssymbol{2}$ & $700$ s\\\cmidrule(lr){2-3}
& \cite{sun2018practical}$\ssymbol{3}$ & $10$ s\\\cmidrule(lr){2-3}
& ACE & $1.3$ s      \\[1pt] \bottomrule

\multicolumn{3}{p{7cm}}{\footnotesize{$\ssymbol{5}$The comparison with different schemes is presented with different number of matched IDs since the results are extracted from the cited papers and they evaluated their schemes with different $\alpha$s; $\ssymbol{4}$: These are approximate times that are extracted from the schemes' provided graphs in their papers; $\ssymbol{1}$: Data is extracted from their paper with $|\mathsf{DB}|=10^7$; $\ssymbol{2}$: Data is extracted from \cite{sun2018practical} with number of deletions$=100$ for the Janus protocol in this paper; $\ssymbol{3}$: number of deletions$=100$.}}

\end{tabular}
\end{table}

\begin{table}[t]
\caption{Storage size (original, encrypted FSet, ISet on Server, and W on Vetter) for $1,000$ number of $\DOID$s with different number of keywords}
\label{table:Storage cost}
\centering
\begin{tabular}{ccccc}
\toprule
\multirow{1}{*}{\#Keywords(x)}   
 & Original  & FSet  & ISet  & W  \\ \midrule
$500$       & $2.8$ MB         & $82$ MB        & $68$ MB       & $0.76$ MB               \\[1pt] \hline
$1,000$       & $5.7$ MB         & $164$ MB        & $137$ MB       & $1.5$ MB               \\[1pt] \hline
$4,000$      & $26.2$ MB       & $656$ MB        & $546$ MB       & $6$ MB                 \\[1pt] \bottomrule
\end{tabular}
\end{table}

\item{Storage:} The storage cost on the server side (FSet, ISet), and on the vetter side (W) are presented in Table \ref{table:Storage cost}. The results are for different datasets with $1,000$ number of $\DOID$s and different number of keywords. The storage size on the server side increases when the number of $\DOID$s or the number of keywords of an $\DOID$ increases, but the size of W depends only on the number of distinct keywords in the dataset.

\end{itemize}

\section{Conclusion}\label{conclusion}

In this paper, we introduce our novel scheme called ACE, which addresses the challenges of consent revocation and non-interactive instant deletion based on the data owner's identifier (ID). ACE achieves this by implementing physical deletion of a data owner's information at the moment their consent is revoked. By promptly removing the data instead of retaining it for later deletion, ACE ensures compliance with privacy regulations and mitigates potential privacy concerns.
Moreover, we define a hard problem, D-ACE, and prove its hardness by a security reduction from DDH to D-ACE. We present two new definitions of ID-based forward privacy (IDFP) and ID-based backward privacy (IDBP). Hence, we use these tools to facilitate our formal security proof of ACE. Finally, we evaluate ACE using real-life and synthetic genomic datasets and show its performance and applicability while providing the advantage of IDFP/IDBP in our scheme, with an instant deletion based on ID.

\remove{

\begin{acks}
To Robert, for the bagels and explaining CMYK and color spaces.
\end{acks}

}


\bibliographystyle{unsrt}
\bibliography{main-base}


\newpage
\appendix

\section{Background}\label{background}
\begin{definition} (Forward and Type-III-Backward privacy). An $\mathcal{L}$-adaptively secure DSSE scheme $\Sigma$ is forward-and-Type-III-backward private iff the leakage functions of Update and Search, $\mathcal{L}^{\textup{Update}}$ and $\mathcal{L}^{\textup{Search}}$ can be written as

\begin{center}
    
$
\mathcal{L}^{\textup{Update}}(\mathsf{o p, w, i d})=\mathcal{L}^{\prime}(\mathsf{o p})$ and $\mathcal{L}^{\textup{Search}}(\mathsf{w})=\mathcal{L}^{\prime \prime}(\operatorname{sp}(\mathsf{w}), \operatorname{TimeDB}(\mathsf{w}), \operatorname{DelHist}(\mathsf{w}))
$
\end{center}
where $\mathcal{L}^{\prime}$ and $\mathcal{L}^{\prime \prime}$ are two stateless functions.
\end{definition}

There are two further types of backwards privacy, referred to as Type-I and Type-II backward privacy, in addition to Type-III backward privacy. A search query only reveals the total number of updating $\mathsf{w}$ and TimeDB($\mathsf{w}$) in order to maintain Type-I backwards privacy.
The timestamps of updating $\mathsf{w}$, however, can also be leaked by a Search query when Type-II backward privacy is used.
The Type-III backward privacy has been defined as an example and the formal definitions of Type-I and Type-II backward privacy can be found in \cite{bost2017forward}. 

\section{Proof of Theorem \ref{theorem}}\label{appendix:proof}

\begin{proof}
To prove the security of our scheme, we construct a simulator, which takes as inputs leakage functions
$\mathcal{L}^{S t p}(\lambda)$,  $\mathcal{L}^{Updt}(add,$ $\{id1,W1\},$ $\{id2,$ $W2\},$ $\ldots)$ and query-info, $\mathcal{L}^{Updt}(del,id)$, $\mathcal{L}^{Srch}(w)$ to simulate protocols Setup, Update, and Search, respectively. We will demonstrate that the simulated scheme is indistinguishable from the real scheme under the non-adaptive attacks. query-info is given to the simulator at the Update phase that gives the information of $\operatorname{Delw}(id)$ -for the IDs that are selected by adversary to be deleted- to the simulator at the beginning. Algorithm \ref{alg: Simulator} describes the simulator.

For constructing the simulator, we are going to derive several games from the real world game.\newline
\textbf{Game} G$_0\hspace{4mm}$G$_0$ is exactly the real world security game depicted in Algorithm \ref{alg: game G0} and \ref{alg: game G0-continue}.
\begin{center}
$\mathbb{P}[\operatorname{Real}_{\mathcal{A}}^{\Sigma}(\lambda)=1]$=$\mathbb{P}[$G$_0=1]$
\end{center}

\begin{algorithm}[hbt!]
\caption{\textbf{Game G$_0$}}
        \label{alg: game G0}
\underline{Setup} 
This is same as Setup in Algorithm \ref{alg: Setup}
\remove{
 \begin{algorithmic}[1]

\State Select keys $\K_S, \K_1$ for PRF $F$ and ($\mathsf{SK, PK}$) for $\pi $ and keys $\K_T,~\K_2$ for PRF $F_p$ (with range in $\mathbb{Z}^*_p$) and $k_h$ for keyed hash function H
using security parameter $\lambda$, and $\mathbb{G}$ a group of prime order $p$ and generator $\mathsf{g}$.
\State Initialise empty maps $\mathbf{W}[\mathsf{w}], \mathsf{FSet}$ and empty dictionary $\mathsf{ISet}$

\remove{
 \State Parse $\DBGePh$ as $({\DOID}_i, {\mathsf{w}}_j)$ 

\For {each ${\mathsf{w}}$} 
\State $\mathsf{tag}_{\mathsf{w}} \gets F_p(\K_T,{\mathsf{w}})$; $\K_{\mathsf{w}} \gets F(\K_S,{\mathsf{w}})$.
\State Set a counter $c\gets 1$ and select $ \mathsf{S T_{0}} \stackrel{\$}{\leftarrow} \mathcal{M}$
\For {${\DOID}_i \in \DBGePh({\mathsf{w}})$}
\If {there is no index $r_{\DOID}$ in $\mathsf{FSet}$ for ${\DOID}_i$}
\State Compute index $r_{{\DOID}_i}\gets F(\K_1,{\DOID}_i)$
and a tag $\mathsf{tag}_{{\DOID}_i} \gets F_p(\K_2,{{\DOID}_i})$
\EndIf
\State Compute $\DOID' \gets {E}(\K_{\mathsf{w}},\DOID)$ 

\State $ \mathsf{S T}_{c} \leftarrow \pi_{\mathsf{S K}}^{-1}\left(\mathsf{S T}_{c-1}\right)$

\State $\ell \leftarrow H(k_h,\mathsf{g}^{{\mathsf{S T}_{c}}\cdot {\mathsf{tag}_{\mathsf{w}}}})$ 

\State Append $\DOID'$ to ${\bf \mathsf{ISet}[\ell]}$ and $c\gets c+1$\newline
\State Compute $\Delta \gets \mathsf{g}^{{\mathsf{S T}_{c}}\cdot {\mathsf{tag}_{\mathsf{w}}}/{\mathsf{tag}_{{\DOID}_i}}}$ 


\State Append $\Delta$ into ${{ \mathsf{FSet}}[r_{{\DOID}_i}]}$
\EndFor
\State $\mathbf{W}[\mathsf{w}] \leftarrow\left(\mathsf{S T}_{c}, c\right)$
\EndFor

}

\State $\mathsf{E}\DBGePh1 =\mathsf{FSet}, \mathsf{E}\DBGePh2 = \mathsf{ISet}$.
\State \Return{$\mathsf{E}\DBGePh1,\mathsf{E}\DBGePh2$} 
, and keep $\mathbf{W}[\mathsf{w}]$
 and {$\K=(\K_S, \K_T, \K_1, \K_2,k_h)$} 
\end{algorithmic}

}


\underline{Update-add (a set of $\DOID$s with their keywords, $\{\DOID_i,\mathbf{W}_{\DOID_i}\}$)}
 \begin{algorithmic}[1]
\State Parse the set as $({\DOID}_i, {\mathsf{w}}_j)$
\For {each ${\mathsf{w}}$} 
\State $\mathsf{tag}_{\mathsf{w}} \gets F(\K_T,{\mathsf{w}})$; $\K_{\mathsf{w}} \gets F(\K_S,{\mathsf{w}})$.//specific ${\DOID}_i$

\State $\left(\mathsf{S T}_{c}, c\right) \leftarrow \mathbf{W}[\mathsf{w}]$
\If {$\left(\mathsf{S T}_{c}, c\right)=\perp$}
\State $\quad \mathsf{S T_{0}} \stackrel{\$}{\leftarrow} \mathcal{M}, c \leftarrow0$
\EndIf

\For {${\DOID}_i \in \DBGePh({\mathsf{w}})$}
\If {there is no index $r_{\DOID}$ in $\mathsf{FSet}$ for ${\DOID}_i$}
\State Compute index $r_{{\DOID}_i}\gets F(\K_1,{\DOID}_i)$
and a tag $\mathsf{tag}_{{\DOID}_i} \gets F(\K_2,{{\DOID}_i})$
\EndIf
\State Compute $\DOID' \gets {E}(\K_{\mathsf{w}},\DOID)$ 
\State $c\gets c+1$
\State $ \mathsf{S T}_{c} \leftarrow \pi_{\mathsf{S K}}^{-1}\left(\mathsf{S T}_{c-1}\right)$; $ \mathsf{S T'}_{c} \leftarrow \left(\mathsf{S T}_{c}\text{ mod }p\right)$
\State $\ell \leftarrow H(k_h,\mathsf{g}^{{\mathsf{S T'}_{c}}\cdot {\mathsf{tag}_{\mathsf{w}}}})$ 

\State Append $\DOID'$ to ${\bf \mathsf{ISet}[\ell]}$ 

\State Compute $\Delta \gets \mathsf{g}^{{\mathsf{S T'}_{c}}\cdot {\mathsf{tag}_{\mathsf{w}}}/{\mathsf{tag}_{{\DOID}_i}}}$ 

\State Append $\Delta$ into ${{ \mathsf{FSet}}[r_{{\DOID}_i}]}$
\EndFor

\State $\mathbf{W}[\mathsf{w}] \leftarrow\left(\mathsf{S T}_{c}, c\right)$
\EndFor

\end{algorithmic}

\end{algorithm}


\begin{algorithm}[hbt!]
\caption{\textbf{Game G$_0$-continue}}
        \label{alg: game G0-continue}

\underline{Update-del (all entries for a particular $\DOID_i$)}
 \begin{algorithmic}[1]

\State Compute $\mathsf{tag}_{{\DOID}_i} \gets F(\K_2,{{\DOID}_i})$, $r_{{\DOID}_i}\gets F(\K_1,{\DOID}_i)$ 
\For {all elements $\Delta_i$ in $\mathsf{FSet[r_{{\DOID}_i}]}$}
\State Compute $\ell \gets H(k_h,{\Delta_i}^{{\mathsf{tag}_{{\DOID}_i}}})$
\State Remove corresponding entry from $\mathsf{ISet}[\ell]$ and $\mathsf{\ell}$
\EndFor
\State Remove entries of $\mathsf{FSet[r_{{\DOID}_i}}]$ and $\mathsf{r_{{\DOID}_i}}$
\end{algorithmic}

\underline{Search}
\begin{algorithmic}[1]

\State Vetter computes $\mathsf{tag}_{\mathsf{w}} \gets F(\K_T,{\mathsf{w}})$, $\mathsf{tk}\gets \mathsf{g}^{\mathsf{tag}_{\mathsf{w}}}$ and gets $\left(\mathsf{S T}_{c}, c\right) \leftarrow \mathbf{W}[\mathsf{w}]$
\State $\mathsf{RSet}$ $\gets \{\}$

\If {$\left(\mathsf{S T}_{c}, c\right)=\perp$}
\State  return $\emptyset$
\EndIf
\State Send $\left(\mathsf{tk}, \mathsf{S T}_{c}, c\right)$ to the server.\newline
Server:
\For {$i=c$ to 1}
\State $ \ell \leftarrow H(k_h,\mathsf{tk}^{({\mathsf{S T}_{i}}\text{ mod }p)})$
\State $ \DOID' \leftarrow \mathsf{ISet}\left[\ell\right]$
\State $\mathsf{RSet}$ $\gets \mathsf{RSet}\cup \DOID'$
\State $\mathsf{S T}_{i-1} \leftarrow \pi_{\mathrm{PK}}\left(\mathsf{S T}_{i}\right)$
\EndFor

\State \Return{$\mathsf{RSet}$}
\end{algorithmic}

\end{algorithm}


\noindent
\textbf{Game} G$_1\hspace{4mm}$Instead of calling PRF when generating tags for w and id, G1 picks a new random tag when it is confronted to a new w and id, and stores it in a table so it can be reused next time needed. It also does the same for generating $\mathsf{K_w}$ and indices $r_\DOID$. If an adversary is able to distinguish between G0 and G1, we can then build a reduction able to distinguish between PRF F and a truly random function. More formally, there exists an efficient adversary B1 such that
\begin{center}
    $\mathbb{P}\left[G_0=1\right]-\mathbb{P}\left[G_1=1\right] \leq \operatorname{Adv}_{F, B_1}^{\mathrm{prf}}(\lambda)$
\end{center}
\textbf{Game} G$_2\hspace{4mm}$This game is similar to G1 except that we encrypt a constant 0 by using the symmetric encryption SE when encrypting the IDs. If an adversary A can distinguish G2
from G1, then we can establish an adversary B2 to break the IND-CPA security of the standard symmetric key encryption SE.
\begin{center}
    $\mathbb{P}\left[G_1=1\right]-\mathbb{P}\left[G_2=1\right] \leq \operatorname{Adv}_{SE, B_2}^{\mathrm{IND-CPA}}(\lambda)$
\end{center}



\begin{algorithm}[ht!]
\caption{\textbf{Game G$_3$, \textcolor{blue}{$\hat{\text{G}_3}$}}}
        \label{alg: game}
\underline{Update-add (a set of $\DOID$s with their keywords, $\{\DOID_i,\mathbf{W}_{\DOID_i}\}$)}
 \begin{algorithmic}[1]
\State Parse the set as $({\DOID}_i, {\mathsf{w}}_j)$
\For {each ${\mathsf{w}}$} 
\State $\mathsf{tag}_{\mathsf{w}} \stackrel{\$}{\leftarrow} \{0,1\}^\lambda$; $\K_{\mathsf{w}} \stackrel{\$}{\leftarrow} \{0,1\}^\lambda$.

\State $\left(\mathsf{S T}_{c}, c\right) \leftarrow \mathbf{W}[\mathsf{w}]$
\If {$\left(\mathsf{S T}_{c}, c\right)=\perp$}
\State $\quad \mathsf{S T_{0}} \stackrel{\$}{\leftarrow} \mathcal{M}, c \leftarrow0$
\EndIf

\For {${\DOID}_i \in \DBGePh({\mathsf{w}})$}
\If {there is no index $r_{\DOID}$ in $\mathsf{FSet}$ for ${\DOID}_i$}
\State Compute index $r_{{\DOID}_i}\stackrel{\$}{\leftarrow} \{0,1\}^\lambda$
and a tag $\mathsf{tag}_{{\DOID}_i} \stackrel{\$}{\leftarrow} \{0,1\}^\lambda$
\EndIf
\State $\DOID' \gets E(\mathsf{K_w},\{0\}^\lambda)$ 
\State $c\gets c+1$

\State $\ell_{ij}\stackrel{\$}{\leftarrow} \{0,1\}^\lambda$

\State $\mathsf{S T}_{c} \leftarrow \pi_{\mathsf{S K}}^{-1}\left(\mathsf{S T}_{c-1}\right)$; $ \mathsf{S T'}_{c} \leftarrow \left(\mathsf{S T}_{c}\text{ mod }p\right)$
\color{blue}
\If{H(k1,$\mathsf{g}^{{\mathsf{S T'}_{c}}\cdot {\mathsf{tag}_{\mathsf{w}}}})\neq \perp$}
\State bad$\gets$true; $\ell_{ij}\gets$H(k1,$\mathsf{g}^{{\mathsf{S T'}_{c}}\cdot {\mathsf{tag}_{\mathsf{w}}}})$
\EndIf
\color{black}
\If{$\DOID_i$ is in query-info to be deleted and $\mathsf{w}$ related to the $\Delta_j$ $\in \{\text{Srch}<\text{Del}\}_i$ }

\State $\Delta_j \gets {\mathsf{g}^{{({\mathsf{S T'}_{c}}\cdot {\mathsf{tag}_{\mathsf{w}}}})/\mathsf{tag_{\DOID_i}}}}$
\State program H s.t. H(k1,$\mathsf{g}^{{\mathsf{S T'}_{c}}\cdot {\mathsf{tag}_{\mathsf{w}}}}$)$\gets \ell_{ij} $
\State Keep the record of the $\mathsf{ST}$s used for $\mathsf{w}$
\State else
\State $\Delta_j\stackrel{\$}{\leftarrow}\mathbb{G}$

\EndIf


\State Append $\DOID'$ to ${\bf \mathsf{ISet}[\ell]}$ 


\State Append $\Delta$ into ${{ \mathsf{FSet}}[r_{{\DOID}_i}]}$
\EndFor

\State $\mathbf{W}[\mathsf{w}] \leftarrow\left(\mathsf{S T}_{c}, c\right)$
\EndFor

\end{algorithmic}


\underline{H(k,st)}
 \begin{algorithmic}[1]
 \State v$\gets$H(k,st)
 \If{v$=\perp$}
 \State v$\stackrel{\$}{\leftarrow}\{0,1\}^{\lambda}$
 \color{blue}
 \If{$\exists\mathsf{w},c$ s.t. st=$\mathsf{ST}_c\in \mathbf{W}[\mathsf{w}]$}
 \State bad$\gets$true; v$\gets\ell_{ij}$
 \EndIf
 \color{black}
 \State H(k,st)$\gets$v
 \EndIf
 \State Return v
 \end{algorithmic}


\end{algorithm}


\textbf{Game} G$_3\hspace{4mm}$In G$_3$, in the Update phase, instead of calling H to generate the $\ell$, we
pick random strings. Then, during the Search protocol, the random oracle H is programmed so that
H(K1, $\mathsf{tk}^{(\mathsf{ST_{c}}\text{ mod }p)}$) = $\ell$.
Algorithm \ref{alg: game} and \ref{alg: game G3-continue} formally describes G$_3$, and also introduces an intermediate game in blue color. In the pseudo-code, we explicit the calls to the random oracle H, and keep track of the transcripts via the table H.

The point of $\hat{\text{G}_3}$ is to ensure consistency of H's transcript: in $\hat{\text{G}_3}$, H is never programmed to two different values for the same input by Search' line 8. Instead of immediately generating the $\ell$ derived from the $c$-th $S T$ for keyword $w$ from H, $\hat{\text{G}_3}$ randomly either chooses them if $\left(\mathsf{S T_{c}}\right)$ does not already appear in H's transcript, or, if this is already the case, sets $\ell$ to the already chosen value $\mathrm{H}\left[K1,\mathsf{g^{ST'_{c}\cdot tag_w}}\right]$. Then, $\hat{\text{G}_3}$ programs the random oracle when needed by the Search protocol (line 8) or by an adversary's query (line 5 of H), so that it's outputs are consistent with the chosen values of the $\ell$'s.

By using query-info and getting the information for IDs that are going to be deleted with their keywords that will be searched before deletion (getting the information of Delw in advance), the entries are generated honestly as they are going to be revealed later, and for the not-deleted, not-searched entries, the entries look independent random (line 24). If the adversary is able to distinguish these two games, we can use it to distinguish problem D-ACE.
We can use Algorithm \ref{alg: assumption2} to simulate all the entries to the adversary.
The $a_i,b_j,c_j$ in D-ACE correspond to $\mathsf{tag_w,ST'_c,tag_\DOID}$ in the G$_3$, respectively. 
\begin{center}
    $\mathbb{P}\left[G_2=1\right]-\mathbb{P}\left[\hat{\text{G}_3}=1\right] \leq \operatorname{Adv}_{B_4}^{\mathrm{D-ACE}}(\lambda)$
\end{center}



\begin{algorithm}[ht!]
\caption{\textbf{Game G$_3$, \textcolor{blue}{$\hat{\text{G}_3}$}}-continue}
        \label{alg: game G3-continue}

\underline{Update-del (all entries for a particular $\DOID_i$)}
 \begin{algorithmic}[1]

\State Use $\mathsf{tag}_{{\DOID}_i}$, $r_{{\DOID}_i}$ 
\For {all elements $\Delta_i$ in $\mathsf{FSet[r_{{\DOID}_i}]}$ in order}
\State Compute $\ell \gets H(k1,{\Delta_i}^{{\mathsf{tag}_{{\DOID}_i}}})$
\State Remove corresponding entry from $\mathsf{ISet}[\ell]$ and $\mathsf{\ell}$
\EndFor
\State Remove entries of $\mathsf{FSet[r_{{\DOID}_i}}]$ and $\mathsf{r_{{\DOID}_i}}$
\end{algorithmic}

\underline{Search}
\begin{algorithmic}[1]

\State Use $\mathsf{tag}_{\mathsf{w}}$, $\mathsf{tk}\gets \mathsf{g}^{\mathsf{tag}_{\mathsf{w}}}$ and gets $\left(\mathsf{S T}_{c}, c\right) \leftarrow \mathbf{W}[\mathsf{w}]$
\State $\mathsf{RSet}$ $\gets \{\}$

\If {$\left(\mathsf{S T}_{c}, c\right)=\perp$}
\State  return $\emptyset$
\EndIf
\State Send $\left(\mathsf{tk}, \mathsf{S T}_{c}, c\right)$ to the server.\newline
Server:
\For {$i=c$ to 1}
\State $ \ell \leftarrow H(k1,\mathsf{tk}^{({\mathsf{S T}_{c}}\text{ mod }p)})$
\State $ \DOID' \leftarrow \mathsf{ISet}\left[\ell\right]$
\State $\mathsf{RSet}$ $\gets \mathsf{RSet}\cup \DOID'$
\State $\mathsf{S T}_{i-1} \leftarrow \pi_{\mathrm{PK}}\left(\mathsf{S T}_{i}\right)$
\EndFor

\State \Return{$\mathsf{RSet}$}
\end{algorithmic}

\end{algorithm}



To bound the distinguishing advantage between $\hat{\text{G}_3}$ and ${\text{G}_3}$, we can see that, if bad is set to true, we can break the one-wayness of the trapdoor permuattion (TDP). More formally, we can construct a reduction $B_3$ from a distinguisher A inserting N keyword/document pairs in the database (refer to \cite{bost2016ovarphiovarsigma} for more information).

\begin{center}
    $\mathbb{P}\left[\hat{\text{G}_3}=1\right]-\mathbb{P}\left[G_3=1\right] \leq N \cdot \operatorname{Adv} ^{\mathrm{OW}}_{\pi, B_3}(\lambda)$
\end{center}

Therefore,

\begin{center}
    $\mathbb{P}\left[G_2=1\right]-\mathbb{P}\left[G_3=1\right] \leq N \cdot \operatorname{Adv} ^{\mathrm{OW}}_{\pi, B_3}(\lambda)+\operatorname{Adv}_{B_4}^{\mathrm{D-ACE}}(\lambda)$
\end{center}

\noindent
\textbf{Game} G$_4\hspace{4mm}$In Search, G$_4$ generates the search token from $\mathsf{ST_0}$ by iterating $\Pi$ instead of using an already computed and stored token and if an entry is accessed for the first time, the game randomly picks it in $\mathcal{M}$. This happens for all $\mathsf{ST}$s except the ones that have been used for the tags related to the query-info IDs.
\begin{center}
    $\mathbb{P}\left[G_3=1\right]-\mathbb{P}\left[G_4=1\right] =0$
\end{center}
\textbf{The simulator}$\hspace{4mm}$
The simulator is described in Algorithm \ref{alg: Simulator}. Instead of the keyword $w$, Simulator uses the counter $w$ = min sp(w) uniquely mapped from $w$ using the leakage function.
\begin{center}
    $\mathbb{P}\left[G_4=1\right]-\mathbb{P}\left[\operatorname{Ideal}_{\mathcal{A}, \mathcal{S}}^{\Sigma}(\lambda)=1\right] =0$
\end{center}

\remove{

\begin{algorithm}[hbt!]
\caption{\textbf{Game G$_4$}}
        \label{alg: game G4}
\underline{$\operatorname{Setup}\left(\lambda\right)$}
 \begin{algorithmic}[1]
\State Initialise empty maps $\mathsf{FSet,ISet,W}$

\State Select ($\mathsf{SK, PK}$) for $\pi $ 
using security parameter $\lambda$, and $\mathbb{G}$ a group of prime order $p$ and generator $\mathsf{g}$. 

\State Send $\mathsf{FSet,ISet}$ as $\mathsf{EGDB1,EGDB2}$ to the server.

\end{algorithmic}


\underline{Update-add$\left(add,(i d1,id2,\ldots),\text{ query-info}\right)$}

 \begin{algorithmic}[1]

\State Extract the timestamp of adding the $id$s from $\operatorname{AddHist}(\text{set of }id)$ and choose $u \gets \operatorname{AddHist}(\text{set of }id)$

\For{i=1 to $\mathsf{N}_\DOID$}
\State Randomly pick index $r_{{\DOID}_i}\stackrel{\$}{\leftarrow} \{0,1\}^\lambda$ and $\mathsf{tag}_{{\DOID}_{i}}\stackrel{\$}{\leftarrow} \{0,1\}^\lambda$

\For{j=1 to $\mathsf{NW}_{{\DOID}_i}$}
\State $\ell_{ij}\stackrel{\$}{\leftarrow} \{0,1\}^\lambda$
\If{$\DOID_i$ is in query-info to be deleted and $\mathsf{w}$ related to the $\Delta_j$ $\in S_i \text{ with }\{t_\text{Srch}<t_\text{Del}\}$ }
\State $\mathsf{tag_w}\stackrel{\$}{\leftarrow} \{0,1\}^\lambda$ and keep it for this $\mathsf{w}$
\State $\left(\mathsf{S T}_{c}, c\right) \leftarrow \mathbf{W}[\mathsf{w}]$
\If {$\left(\mathsf{S T}_{c}, c\right)=\perp$}
\State $\quad \mathsf{S T_{0}} \stackrel{\$}{\leftarrow} \mathcal{M}, c \leftarrow0$

\EndIf
\State $c\gets c+1$
\State $ \mathsf{S T}_{c} \leftarrow \pi_{\mathsf{S K}}^{-1}\left(\mathsf{S T}_{c-1}\right)$; $ \mathsf{S T'}_{c} \leftarrow \left(\mathsf{S T}_{c}\text{ mod }p\right)$
\State $\Delta_j \gets {\mathsf{g}^{{({\mathsf{S T'}_{c}}\cdot {\mathsf{tag}_{\mathsf{w}}}})/\mathsf{tag_{\DOID_i}}}}$// meaning: $\Delta_j \gets {{(\text{generated token})}^{{1}/\mathsf{tag_{\DOID_i}}}}$
\State program H s.t. H(k1,$\mathsf{g}^{{\mathsf{S T'}_{c}}\cdot {\mathsf{tag}_{\mathsf{w}}}}$)$\gets \ell_{ij} $
\State $\mathbf{W}[\mathsf{w}]\gets\left(\mathsf{S T}_{0}\right) $
\State else

\State $\Delta_j \stackrel{\$}{\leftarrow}\mathbb{G}$
\EndIf
\State Append $\Delta_j$ to $\mathsf{FSet}[r_{{\DOID}_i}]$



\State $\DOID' \gets E(\mathsf{K_w},\{0\}^\lambda)$// ($\mathsf{K_w}$ generated randomly for each $\mathsf{w}$ and kept in a set for later use)
\State Append $\DOID'_{ij}$ to $\mathsf{ISet}[\ell_{ij}]$
\EndFor
\EndFor

\remove{
\State extract the $\DOID$ with the same $\operatorname{AddHist}(id)$ and use $\operatorname{N}(add)$ to generate the below tags to add to the database

\For {each ${\mathsf{w}}$} 
\State get $\mathsf{tag}_{\mathsf{w}}$ and $\K_{\mathsf{w}} $ from M0.//specific ${\DOID}_i$

\State $\left(\mathsf{S T}_{c}, c\right) \leftarrow \mathbf{W}[\mathsf{w}][u-1]$
\If {$\left(\mathsf{S T}_{c}, c\right)=\perp$}
\State $\quad \mathsf{S T_{0}} \stackrel{\$}{\leftarrow} \mathcal{M}, c \leftarrow-1$
\State else
\State $\quad \mathsf{S T}_{c+1} \leftarrow \pi_{\mathsf{S K}}^{-1}\left(\mathsf{S T}_{c}\right)$
\EndIf

\State Get $\mathsf{tag}_{{{\DOID}_i}}$ and ${r_{{\DOID}_i}}$ from M1

\If {there is no index $r_{\DOID}$ in $\mathsf{FSet}$ for ${\DOID}_i$}
\State Compute index $r_{{\DOID}_i}\stackrel{\$}{\leftarrow} \{0,1\}^\lambda$
and a tag $\mathsf{tag}_{{\DOID}_i} \stackrel{\$}{\leftarrow} \{0,1\}^\lambda$
\EndIf
\State Compute $\DOID' \gets {E}(\K_{\mathsf{w}},\DOID)$ 


\State $\ell \leftarrow \mathsf{g}^{{\mathsf{S T}_{c+1}}\cdot {\mathsf{tag}_{\mathsf{w}}}}$

\State Append $\DOID'$ to ${\bf \mathsf{ISet}[\ell]}$ and $c\gets c+1$\newline

\State Compute $\Delta \gets \mathsf{g}^{{\mathsf{S T}_{c+1}}\cdot {\mathsf{tag}_{\mathsf{w}}}/{\mathsf{tag}_{{\DOID}_i}}}$ 



\State Append $\Delta$ into ${{ \mathsf{FSet}}[r_{{\DOID}_i}]}$
\EndFor
\State $\mathbf{W}[\mathsf{w}][u] \leftarrow\left(\mathsf{S T}_{c+1}, c+1\right)$

\State $\mathsf{E}\DBGePh1[u] =\mathsf{FSet}, \mathsf{E}\DBGePh2[u] = \mathsf{ISet}$.
}

\end{algorithmic}




\end{algorithm}

\begin{algorithm}[hbt!]
\caption{\textbf{Game G$_4$}-continue}
        \label{alg: game G4-continue}

\underline{Update-del$\left(del,id\right)$}
 \begin{algorithmic}[1]

\State Extract the timestamps of adding/deleting the $id$ from $\operatorname{DelHist}(id)$ and choose $u \gets u^{del} \text{ in } \operatorname{DelHist}(id)$

\State Extract the random chosen $\mathsf{tag}_{{\DOID}_i}$, and use ${r_{{\DOID}_i}}$ for deleting $\DOID_i$
\For {all elements $\Delta_j$ in $\mathsf{FSet[r_{{\DOID}_i}]}$, use the extracted correlations ($\Delta_j 2\ell_{ij}$ in $\mathsf{Delindex}(id))$}

\State program H s.t. H(k1, ${\Delta_{ij}}^{\mathsf{tag}_{{\DOID}_i}}$)$\gets\ell_{ij}$


\EndFor
\State Send $\mathsf{tag}_{{\DOID}_i}$, and ${r_{{\DOID}_i}}$ as deletion tokens to server
\end{algorithmic}


\underline{$\operatorname{Search}(w)$}
\begin{algorithmic}[1]
%

\State Randomly select $\mathsf{tag_w}$ or use it if w was in the S$_i$ with $\{t_{\text{Srch}}<t_{\text{Del}}\}$ in Update
\State $\mathsf{S T_{0}} \stackrel{\$}{\leftarrow} \mathcal{M}$ for the ones not in S$_i$, and $c\gets 1$; $(\mathsf{ST}_{c},c) \leftarrow \mathbf{W}[{\mathsf{w}}]$ for $w\in S_i$

\For{all added $\DOID$s (m number of them) in $\operatorname{rp}(w)$ at time u in comparison with $\operatorname{rp}(w)$ at time u-1}

\For{i=c to c+m-1}
\State skip the skipped tokens from the leakage (indices got deleted before being searched) by computing $\mathsf{S T}_{i} \leftarrow \pi_{\mathsf{S K}}^{-1}\left(\mathsf{S T}_{i-1}\right)$
\State Compute $\mathsf{S T}_{i} \leftarrow \pi_{\mathsf{S K}}^{-1}\left(\mathsf{S T}_{i-1}\right)$ for non-deleted ones
\State $ \mathsf{S T'}_{i} \leftarrow \left(\mathsf{S T}_{i}\text{ mod }p\right)$
\State program H s.t. H(k1,$\mathsf{g}^{{\mathsf{S T'}_{i}}\cdot {\mathsf{tag}_{\mathsf{w}}}}$)$\gets \ell_{ij} $

\EndFor
\EndFor

\State $\mathsf{ST}_{m} \leftarrow \mathbf{W}[{\mathsf{w}}]$

\State Send $\left(\mathsf{g^{tag_{w}}}, \mathsf{S T}_{m}\right)$ to the server.

\end{algorithmic}

\end{algorithm}

}

%

\begin{algorithm}[hb!]
\caption{\textbf{Simulator}}
        \label{alg: Simulator}
\underline{$\operatorname{Setup}\left(\mathcal{L}^{S t p}(\lambda)\right)$}
 \begin{algorithmic}[1]
\State Initialise empty maps $\mathsf{FSet,ISet,W}$

\State Select ($\mathsf{SK, PK}$) for $\pi $ 
using security parameter $\lambda$, and $\mathbb{G}$ a group of prime order $p$ and generator $\mathsf{g}$. 

\State Send $\mathsf{FSet,ISet}$ as $\mathsf{EGDB1,EGDB2}$ to the server.

\remove{
\For {each ${\mathsf{w}}$} 
\State $\mathsf{tag}_{\mathsf{w}} \stackrel{\$}{\leftarrow} \{0,1\}^\lambda$; $\K_{\mathsf{w}} \stackrel{\$}{\leftarrow} \{0,1\}^\lambda$. keep these two in a map M0 for $\mathsf{w}$
\State Set a counter $c\gets 1$ and select $ \mathsf{S T_{0}} \stackrel{\$}{\leftarrow} \mathcal{M}$
\For {${\DOID}_i \in \DBGePh({\mathsf{w}})$}
\If {there is no index $r_{\DOID}$ in $\mathsf{FSet}$ for ${\DOID}_i$}
\State Compute index $r_{{\DOID}_i}\stackrel{\$}{\leftarrow} \{0,1\}^\lambda$
and a $\mathsf{tag}_{{\DOID}_i} \stackrel{\$}{\leftarrow} \{0,1\}^\lambda$. keep these in M1 for $\DOID_i$
\EndIf
\State Compute $\DOID' \gets {E}(\K_{\mathsf{w}},\DOID)$ 

\State $ \mathsf{S T}_{c} \leftarrow \pi_{\mathsf{S K}}^{-1}\left(\mathsf{S T}_{c-1}\right)$

\State $\ell \leftarrow \mathsf{g}^{{\mathsf{S T}_{c}}\cdot {\mathsf{tag}_{\mathsf{w}}}}$



\State Append $\DOID'$ to ${\bf \mathsf{ISet}[\ell]}$ and $c\gets c+1$\newline


\State Compute $\Delta \gets \mathsf{g}^{{\mathsf{S T}_{c}}\cdot {\mathsf{tag}_{\mathsf{w}}}/{\mathsf{tag}_{{\DOID}_i}}}$ 



\State Append $\Delta$ into ${{ \mathsf{FSet}}[r_{{\DOID}_i}]}$
\EndFor
\State $\mathbf{W}[\mathsf{w}][u] \leftarrow\left(\mathsf{S T}_{c}, c\right)$
\EndFor
\State $\mathsf{E}\DBGePh1[u] =\mathsf{FSet}, \mathsf{E}\DBGePh2[u] = \mathsf{ISet}$.
\State \Return{$\mathsf{E}\DBGePh1,\mathsf{E}\DBGePh2$} //Stored on Server, $\mathbf{W}[\mathsf{w}]$
 and {$\K=(\K_S, \K_T, \K_1, \K_2)$} //Stored on Vetter
 }
\end{algorithmic}


\underline{Update-add$\left(\mathcal{L}^{U p d t}(add,(i d1,id2,\ldots)),\text{ query-info}\right)$}

 \begin{algorithmic}[1]

\State Extract the timestamp of adding the $id$s from $\operatorname{AddHist}(\text{set of }id)$ and choose $u \gets \operatorname{AddHist}(\text{set of }id)$

\For{i=1 to $\mathsf{N}_\DOID$}
\State Randomly pick index $r_{{\DOID}_i}\stackrel{\$}{\leftarrow} \{0,1\}^\lambda$ and $\mathsf{tag}_{{\DOID}_{i}}\stackrel{\$}{\leftarrow} \{0,1\}^\lambda$
%
\For{j=1 to $\mathsf{NW}_{{\DOID}_i}$}
\State $\ell_{ij}\stackrel{\$}{\leftarrow} \{0,1\}^\lambda$
\If{$\DOID_i$ is in query-info to be deleted and $\mathsf{w}$ related to the $\Delta_j$ $\in \{\text{Srch}<\text{Del}\}_i$ }
\State $\ell_{ij}\stackrel{\$}{\leftarrow} \{0,1\}^\lambda$ and keep it for this $\mathsf{w}$
\State $\left(\mathsf{S T}_{c}, c\right) \leftarrow \mathbf{W}[\mathsf{w}]$
\If {$\left(\mathsf{S T}_{c}, c\right)=\perp$}
\State $\quad \mathsf{S T_{0}} \stackrel{\$}{\leftarrow} \mathcal{M}, c \leftarrow0$

\EndIf
\State $c\gets c+1$
\State $ \mathsf{S T}_{c} \leftarrow \pi_{\mathsf{S K}}^{-1}\left(\mathsf{S T}_{c-1}\right)$; $ \mathsf{S T'}_{c} \leftarrow \left(\mathsf{S T}_{c}\text{ mod }p\right)$
\State $\Delta_j \gets {\mathsf{g}^{{({\mathsf{S T'}_{c}}\cdot {\mathsf{tag}_{\mathsf{w}}}})/\mathsf{tag_{\DOID_i}}}}$// meaning: $\Delta_j \gets {{(\text{generated token})}^{{1}/\mathsf{tag_{\DOID_i}}}}$
\State program H s.t. H(k1,$\mathsf{g}^{{\mathsf{S T'}_{c}}\cdot {\mathsf{tag}_{\mathsf{w}}}}$)$\gets \ell_{ij} $
\State $\mathbf{W}[\mathsf{w}]\gets\left(\mathsf{S T}_{0}\right) $
\State else

\State $\Delta_j \stackrel{\$}{\leftarrow}\mathbb{G}$
\EndIf



\State Append $\Delta_j$ to $\mathsf{FSet}[r_{{\DOID}_i}]$

\State $\DOID' \gets E(\mathsf{K_w},\{0\}^\lambda)$// ($\mathsf{K_w}$ generated randomly for each $\mathsf{w}$ and kept in a set for later use)
\State Append $\DOID'_{ij}$ to $\mathsf{ISet}[\ell_{ij}]$
\EndFor
\EndFor

\remove{
\State extract the $\DOID$ with the same $\operatorname{AddHist}(id)$ and use $\operatorname{N}(add)$ to generate the below tags to add to the database

\For {each ${\mathsf{w}}$} 
\State get $\mathsf{tag}_{\mathsf{w}}$ and $\K_{\mathsf{w}} $ from M0.//specific ${\DOID}_i$

\State $\left(\mathsf{S T}_{c}, c\right) \leftarrow \mathbf{W}[\mathsf{w}][u-1]$
\If {$\left(\mathsf{S T}_{c}, c\right)=\perp$}
\State $\quad \mathsf{S T_{0}} \stackrel{\$}{\leftarrow} \mathcal{M}, c \leftarrow-1$
\State else
\State $\quad \mathsf{S T}_{c+1} \leftarrow \pi_{\mathsf{S K}}^{-1}\left(\mathsf{S T}_{c}\right)$
\EndIf

\State Get $\mathsf{tag}_{{{\DOID}_i}}$ and ${r_{{\DOID}_i}}$ from M1

\If {there is no index $r_{\DOID}$ in $\mathsf{FSet}$ for ${\DOID}_i$}
\State Compute index $r_{{\DOID}_i}\stackrel{\$}{\leftarrow} \{0,1\}^\lambda$
and a tag $\mathsf{tag}_{{\DOID}_i} \stackrel{\$}{\leftarrow} \{0,1\}^\lambda$
\EndIf
\State Compute $\DOID' \gets {E}(\K_{\mathsf{w}},\DOID)$ 


\State $\ell \leftarrow \mathsf{g}^{{\mathsf{S T}_{c+1}}\cdot {\mathsf{tag}_{\mathsf{w}}}}$ 



\State Append $\DOID'$ to ${\bf \mathsf{ISet}[\ell]}$ and $c\gets c+1$\newline


\State Compute $\Delta \gets \mathsf{g}^{{\mathsf{S T}_{c+1}}\cdot {\mathsf{tag}_{\mathsf{w}}}/{\mathsf{tag}_{{\DOID}_i}}}$ 



\State Append $\Delta$ into ${{ \mathsf{FSet}}[r_{{\DOID}_i}]}$
\EndFor 
\State $\mathbf{W}[\mathsf{w}][u] \leftarrow\left(\mathsf{S T}_{c+1}, c+1\right)$

\State $\mathsf{E}\DBGePh1[u] =\mathsf{FSet}, \mathsf{E}\DBGePh2[u] = \mathsf{ISet}$.
}

\end{algorithmic}




\end{algorithm}

\textbf{Finally, we can conclude:}

\begin{center}
    $\mathbb{P}[\operatorname{Real}_{\mathcal{A}, \mathcal{S}}^{\Sigma}(\lambda)=1]-[\mathbb{P}\left[\operatorname{Ideal}_{\mathcal{A}, \mathcal{S}}^{\Sigma}(\lambda)=1\right] \leq $\\
    $\operatorname{Adv}_{F, B_1}^{\mathrm{prf}}(\lambda)+\operatorname{Adv}_{SE, B_2}^{\mathrm{IND-CPA}}(\lambda)+N \cdot \operatorname{Adv} ^{\mathrm{OW}}_{\pi, B_3}(\lambda)+\operatorname{Adv}_{B_4}^{\mathrm{D-ACE}}(\lambda)$
\end{center}


\begin{algorithm}[hb!]
\caption{\textbf{Simulator}-continue}
        \label{alg: Simulator-continue}

\underline{Update-del$\left(\mathcal{L}^{U p d t}(del,i d)\right)$}
 \begin{algorithmic}[1]

\State Extract the timestamps of adding/deleting the $id$ from $\operatorname{DelHist}(id)$ and choose $u \gets u^{del} \text{ in } \operatorname{DelHist}(id)$

\State Extract the random chosen $\mathsf{tag}_{{\DOID}_i}$, and use ${r_{{\DOID}_i}}$ for deleting $\DOID_i$
\For {all elements $\Delta_j$ in $\mathsf{FSet[r_{{\DOID}_i}]}$, use the extracted correlations ($\Delta_j 2\ell_{ij}$ in $\mathsf{Delindex}(id))$}






\State program H s.t. H(k1, ${\Delta_{ij}}^{\mathsf{tag}_{{\DOID}_i}}$)$\gets\ell_{ij}$


\EndFor
\State Send $\mathsf{tag}_{{\DOID}_i}$, and ${r_{{\DOID}_i}}$ as deletion tokens to server
\end{algorithmic}


\underline{$\operatorname{Search}(\mathcal{L}^{\operatorname{Srch}}(w))$}
\begin{algorithmic}[1]

\State $\bar{\mathsf{w}}\gets \text{min(sp($\mathsf{w}$))}$
\State Randomly select $\mathsf{tag_w}$ or use it if w was in the S$_i$ with $\{t_{\text{Srch}}<t_{\text{Del}}\}$ in Update
\State $\mathsf{S T_{0}} \stackrel{\$}{\leftarrow} \mathcal{M}$ for the ones not in S$_i$, and $c\gets 1$; $(\mathsf{ST}_{c},c) \leftarrow \mathbf{W}[{\mathsf{w}}]$ for $w\in S_i$

\For{all added $\DOID$s (m number of them) in $\operatorname{rp}(w)$ at time u in comparison with $\operatorname{rp}(w)$ at time u-1}

\For{i=c to c+m-1}
\State skip the skipped tokens from the leakage (indices got deleted before being searched) by computing $\mathsf{S T}_{i} \leftarrow \pi_{\mathsf{S K}}^{-1}\left(\mathsf{S T}_{i-1}\right)$
\State Compute $\mathsf{S T}_{i} \leftarrow \pi_{\mathsf{S K}}^{-1}\left(\mathsf{S T}_{i-1}\right)$ for non-deleted ones
\State $ \mathsf{S T'}_{i} \leftarrow \left(\mathsf{S T}_{i}\text{ mod }p\right)$
\State program H s.t. H(k1,$\mathsf{g}^{{\mathsf{S T'}_{i}}\cdot {\mathsf{tag}_{\mathsf{w}}}}$)$\gets \ell_{ij} $//use TimeDB[w] to extract set of $\ell$s

\EndFor
\EndFor

\State $\mathsf{ST}_{m} \leftarrow \mathbf{W}[\bar{\mathsf{w}}]$



\State Send $\left(\mathsf{g^{tag_{w}}}, \mathsf{S T}_{m}\right)$ to the server.



\end{algorithmic}

\end{algorithm}

\end{proof}

\remove{

\subsection{Part One}

\subsection{Part Two}

\section{Online Resources}

}


\end{document}